\theoremstyle{plain}
\newtheorem{thm}{Theorem}[section]
\newtheorem{lem}[thm]{Lemma}
\newtheorem{defn}[thm]{Definition}
\newtheorem*{defn*}{Definition}
\newtheorem{cor}[thm]{Corollary}
\newtheorem{claim}[thm]{Claim}
\theoremstyle{remark}
\newtheorem{rem}{Remark}
\newtheorem*{exmp*}{Example}
\newcommand{\N}{\mathbb{N}}
\newcommand{\calC}{\mathcal{C}}
\newcommand{\F}{\mathbb{F}}
\newcommand{\polytime}{\mathrm{P}}
\newcommand{\NP}{\mathrm{NP}}
\newcommand{\ma}{\mathrm{MA}}
\newcommand{\IP}{\mathrm{IP}}
\newcommand{\iP}{\mathrm{ip}}
\newcommand{\Max}{\mathrm{Max}}
\newcommand{\Min}{\mathrm{Min}}
\newcommand{\CP}{\mathrm{CP}}
\newcommand{\NN}{\mathrm{NN}}
\newcommand{\ED}{\mathrm{ED}}
\newcommand{\ov}{\mathrm{ov}}
\newcommand{\Apx}{\mathrm{Apx}}
\newcommand{\rej}{\mathrm{Rej}}
\newcommand{\cc}{\mathrm{cc}}
\newcommand{\dist}{\mathrm{dist}}
\newcommand{\poly}{\mathrm{poly}}
\newcommand{\polylog}{\mathrm{polylog}}
\newcommand{\row}{\mathrm{row}}
\newcommand{\col}{\mathrm{col}}
\newcommand{\spn}{\mathrm{span}}
\global\long\def\then{\Rightarrow}%
\global\long\def\field#1{\mathbb{#1}}%
\global\long\def\ip#1#2{\left\langle #1,#2\right\rangle }%
\title{Finer-grained Reductions \\ in Fine-grained Hardness of Approximation}
\author{
Elie Abboud {\hskip 5em\relax}
Noga Ron-Zewi\thanks{Research of both authors supported in part
    by ISF grant 735/20 and by the European Union (ERC, ECCC, 101076663). Views and opinions expressed are however those of the author(s) only and do not necessarily reflect those of the European Union or the European Research Council. Neither the European Union nor the granting authority can be held responsible for them.} \\ 
Department of Computer Science, University of Haifa \\ \texttt{}\qquad  \texttt{eliabboud1000@gmail.com, noga@cs.haifa.ac.il}
}
\date{}
\begin{document}
 
\maketitle

\begin{abstract}

We investigate the relation between $\delta$ and $\epsilon$ required for obtaining a $(1+\delta)$-approximation in time $N^{2-\epsilon}$ for closest pair problems under various distance metrics, and for other related problems in fine-grained complexity. 

Specifically, our main result shows that  if it is impossible to (exactly) solve the  (bichromatic) inner product (IP) problem for vectors of dimension $c \log N$ in time $N^{2-\epsilon}$, then there is no 
$(1+\delta)$-approximation algorithm for (bichromatic) Euclidean Closest Pair running in time $N^{2-2\epsilon}$, where 
$\delta \approx (\epsilon/c)^2$ (where $\approx$ hides $\polylog$ factors). This improves on the prior result due to Chen and Williams (SODA 2019) which gave a smaller polynomial dependence  of $\delta$ on $\epsilon$, on the order of $\delta \approx (\epsilon/c)^6$.
Our result implies in turn that no $(1+\delta)$-approximation algorithm exists for Euclidean closest pair for $\delta \approx \epsilon^4$,  unless an algorithmic improvement for IP is obtained. This in turn is  very close to the  approximation guarantee of $\delta \approx \epsilon^3$ for  Euclidean closest pair, given by the best known algorithm of Almam, Chan, and Williams (FOCS 2016).
By known reductions, a similar result follows for a host of other related problems in fine-grained hardness of approximation. 

Our reduction combines the hardness of approximation framework of Chen and Williams, together with an MA communication protocol for IP over a small alphabet, that is inspired by the MA protocol of Chen (Theory of Computing, 2020).
\end{abstract}

\section{Introduction}\label{sec:intro}

Traditionally, the approach to determine whether a computational problem is tractable was to find out whether
it has a polynomial-time algorithm. Finding such an algorithm implies that the problem is in $\polytime$,
and thus it was considered efficiently computable. Otherwise, if one is interested
in proving that the problem is intractable, we usually lack
the tools to prove lower bounds; instead one relies on hardness assumptions 
which allow us to prove conditional lower-bounds. 
In the classical theory of $\NP$-hardness, the hardness assumption is that $\polytime \neq \NP$, which is known to imply that no polynomial-time algorithm exists for many central computational problems.

In fine-grained complexity, one is interested in pinning down the  \emph{precise} complexity
of \emph{tractable} computational problems. In particular, a central objective in fine-grained complexity is to 
determine the exact exponent  in the time complexity of
 problems already known to be in $P$. More concretely, given a problem with input length $n$ known
to be solvable in $t(n)$-time, is it possible to solve the problem 
in time $t(n)^{1-\epsilon}$  for some $\epsilon>0$?
 This is  motivated by the fact that despite rigorous study of many central
computational problems in $P$, we have failed to improve on the running time of their best-known algorithms (see for example the survey \cite{Wil18} for a list of such problems). This motivates the question of whether there is an inherent difficulty in the problem that prevents us from finding faster algorithms.

Once more, we typically lack the tools to prove lower bounds, and we  thus instead rely on hardness assumptions to obtain conditional lower bounds for problems in $\polytime$.
One popular such conjecture has been the \emph{Strong Exponential Time Hypothesis} (SETH), which postulates that for any $\epsilon>0$, there exists an integer $k=k(\epsilon)$ so that
it is impossible to solve $k$-SAT on $n$ variables  in time  $2^{(1-\epsilon)n}$ \cite{IP01}.

Another popular conjecture is the \emph{Orthogonal Vector Conjecture} (OVC) which in the low-dimensional regime posits that for any $\epsilon>0$, there exists a $c_{\ov}= c_{\ov}(\epsilon)$ such that given a pair of sets $A,B\subseteq\{0,1\}^{d}$ of cardinality $N$ each and of dimension $d = c_{\ov} \cdot \log N$,  it is impossible to determine whether there exists a pair $(a,b)\in A\times B$
satisfying that $\ip a b=0$ in $N^{2-\epsilon}$ time \cite{GIKW19}. It is known that  SETH implies OVC \cite{Wil05}, and so OVC is at least as plausible as SETH. In terms of algorithms, it is known how to solve the OV problem in time $N^{2-\epsilon}$ with $c = \exp(1/\epsilon)$ \cite{AWY15, CW21}, which implies that $c_{\ov} \geq \exp(1/\epsilon)$.

A related assumption is the \emph{inner product} (IP) \emph{assumption} which postulates that for any $\epsilon>0$, there exists a $c_{\iP}= c_{\iP}(\epsilon)$ such that given a pair of sets $A,B\subseteq\{0,1\}^{d}$ of cardinality $N$ each and of dimension $d = c_{\iP} \cdot \log N$, and an integer $\sigma \in \{0,1,\ldots,d\}$,  it is impossible to determine whether there exists a pair $(a,b)\in A\times B$
satisfying that $\ip a b=\sigma$ in $N^{2-\epsilon}$ time. Once more, since the OV problem is a special case of the IP problem, the IP assumption is at least as plausible as OVC. Indeed, the best known algorithms for the IP problem are only able to solve this problem in time $N^{2-\epsilon}$ with $c \approx 1/\epsilon$ \cite{AW15}, and this only imposes that $c_{\iP} \gtrapprox  1/\epsilon$.\footnote{We use $\approx$, $\gtrapprox$, $\lessapprox$ to hide $\polylog$ factors.} 
 
In recent years, there has been a flurry of work showing fine-grained lower bounds for many central computational problems in P, based on the above assumptions.  A main challenge in showing such fine-grained lower bounds based on these assumptions  is that one must carefully design the reductions so that they run fast enough as not to supersede the lower bound assumptions. 

One fundamental problem for which such fine-grained reductions were shown is the
\emph{Closest Pair}  (CP) \emph{problem}. In this problem, given a distance metric $\dist: \{0,1\}^d \times \{0,1\}^d \to \field R^+$, and given a pair of sets $A,B\subseteq\{0,1\}^{d}$, the goal is to find a pair $(a,b)\in A\times B$ which minimizes $\dist(a,b)$. 
This problem was studied for various metrics such as Hamming, $\ell_{p}$, and edit distance, and it has many applications, for example in computational geometry, geographic
information systems \cite{Hen06}, clustering \cite{Zah71,Alp10},
and matching problems \cite{WTFX07}, to name a few.
For concreteness, in what follows we restrict our attention only to the Euclidean $\ell_2$ metric, though many of the results we mention hold also for other metrics. 

In the low-dimensional regime $d = c \log N$, the trivial algorithm for (Euclidean) closest pair runs in time $\tilde O(N^2)$, while for small values of $c$,
the best known algorithm  runs in time $N^{O(c)}$
\cite{Rabin76, KM95} (so a truly sub-quadratic algorithm is only known for small values of $c$ close to zero).
  On the other hand, in \cite{AW15} it was shown that assuming OVC, for any $\epsilon>0$ there exists $c = c(\epsilon)$ so that
no algorithm can solve this problem in time $N^{2-\epsilon}$.

\subsection{Fine-grained hardness of approximation}

Given the above state of affairs, it is natural to ask whether relaxing the requirements and settling  
for an approximate ``close-enough'' answer can help in designing faster fine-grained algorithms. For example, it is known that for the (Euclidean) CP problem, one can obtain a $(1+\delta)$-approximation with running time $N^{2-\epsilon}$ for $\delta \approx \epsilon^3$ (for any dimension $d \leq N^{1-\epsilon}$) \cite{ACW16}, which is much faster than the best-known exact algorithm.

In terms of impossibility results, known fine-grained reductions can typically be adapted to the approximate setting, based on appropriate 
\emph{gap assumptions},  such as Gap-SETH.\footnote{ The Gap-SETH assumption asserts that for any $\epsilon>0$, there are $k$ and $\delta >0$,  so that no $2^{(1-\epsilon)n}$-time algorithm can, given a $k$-CNF on $n$ variables, distinguish between the case that it is satisfiable, and the case that any assignment satisfies at most an $(1-\delta)$-fraction of its clauses.} 
 In the theory of NP-hardness, it is often possible to base hardness of gap-problems on hardness of exact  problems using PCPs. However, a major barrier  in applying this approach in the fine-grained setting (for example for the purpose of reducing  SETH to Gap-SETH) is the large (super-constant) blow-up in the length of existing PCPs, which translates into a large (super-constant) blow-up in the number of variables $n$ in the reduction. 

Nevertheless, in a recent breakthrough, Abboud, Rubinstein, and Williams \cite{ARW17}   have shown how to utilize PCP machinery (specifically, the sumcheck protocol) for showing fine-grained hardness of approximation results based on \emph{non-gap assumptions}. Since then,  
many works have utilized this framework for showing fine-grained hardness of approximation results for many central problems in P, based on non-gap assumptions such as SETH or OVC 
(see the recent surveys \cite{RW19, FSLM20} for a description of this line of work). 

In particular, for the CP problem, Rubinstein \cite{Rub18} has shown that assuming OVC, for any $\epsilon >0$ there exists $\delta = \delta(\epsilon)$ such that there is no 
$(1+\delta)$-approximation algorithm for (Euclidean)  CP running in time $N^{2-\epsilon}$. This rules out truly sub-quadratic approximation algorithms, running, say, in time $f(\delta) \cdot N^{1.99}$.  However, the obtained dependence of $\delta$ on $\epsilon$ is far from optimal, specifically
$\delta =\exp(- c_{\ov}/\epsilon)$, where $c_{\ov}=c_{\ov}(\epsilon) \geq \exp(1/\epsilon)$ is the constant guaranteed by the OVC conjecture.

In a follow-up work, Chen and Williams \cite{CW19} have shown an improved hardness of approximation result for CP in which $\delta$ only depends polynomially on $\epsilon$. Specifically, they showed that if the IP assumption holds, then for any $\epsilon>0$ there is no 
$(1+\delta)$-approximation algorithm for (Euclidean) CP running in time $N^{2-\epsilon}$, where $\delta = \poly(\epsilon/c_{\iP})$ and $c_{\iP} = c_{\iP}(\epsilon) \gtrapprox 1/\epsilon$  is the constant guaranteed by the IP assumption.
However, the obtained dependence of $\delta$ on $\epsilon$ was still quite small, on the order of $\delta \approx (\epsilon/c_{\iP})^6 \lessapprox \epsilon^{12}$. This is still quite far from the dependence obtained by the best known approximation algorithm for CP which gives an $(1+\delta)$-approximation in time $N^{2-\epsilon}$ for $\delta \approx \epsilon^3$.

In this work, we investigate the question of whether the dependence of $\delta$ on $\epsilon$ can even be further improved, potentially to match the best known approximation algorithm.

\subsection{Our results}

Recall that by the discussion above, the best known approximation algorithm for (Euclidean) CP gives an $(1+\delta)$-approximation in time $N^{2-\epsilon}$ for $\delta \approx \epsilon^3$, while the best-known hardness of approximation result shows that if the IP assumption holds, then no $(1+\delta)$-approximation algorithm running in time $N^{2-\epsilon}$ exists for
$\delta \approx  (\epsilon/c_{\iP})^6 \lessapprox \epsilon^{12}$, where $c_{\iP} = c_{\iP}(\epsilon) \gtrapprox 1/\epsilon$  is the constant guaranteed by the IP assumption.
Thus there remains a large polynomial gap between the upper and lower bounds, and our main result narrows this gap.

\begin{thm}\label{thm:main}
Suppose that the IP assumption holds, i.e.,  for any $\epsilon'>0$, there exists a $c_{\iP}= c_{\iP}(\epsilon')$ such that given a pair of sets $A,B\subseteq\{0,1\}^{d}$ of cardinality $N$ each and of dimension $d = c_{ip}(\epsilon') \cdot logN$, and an integer $\sigma \in \{0,1,\ldots,d\}$,  it is impossible to find a pair $(a,b)\in A\times B$
satisfying that $\ip a b=\sigma$ in $N^{2-\epsilon'}$ time. 

Then for any $\epsilon>0$, there is  $\delta = \tilde \Theta(  ( \frac \epsilon {c_{\iP}(\epsilon/2)})^2)$,  so that any algorithm running in time 
$N^{2-\epsilon}$ cannot 
$(1+\delta)$-approximate Euclidean CP.
\end{thm}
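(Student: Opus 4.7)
The plan is to follow the hardness-of-approximation framework of Chen and Williams, which converts any Arthur--Merlin communication protocol for IP into a fine-grained reduction from \emph{exact} IP to $(1+\delta)$-approximate Euclidean CP. Concretely, given an MA protocol in which Merlin sends a proof $\pi$ of $\ell$ bits, Arthur uses $r$ random bits, and the protocol has soundness error $s$, one constructs, for each $a \in A$ and each Merlin message $\pi$, a real vector $\tilde a_\pi\in\mathbb{R}^{D}$ whose coordinates are indexed by Arthur's random strings and encode Alice's side of the verification; Bob does likewise. The inner product $\ip{\tilde a_\pi}{\tilde b_\pi}$, and hence the squared Euclidean distance $\|\tilde a_\pi-\tilde b_\pi\|_{2}^{2}$, captures Arthur's acceptance probability on message $\pi$. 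The number of vectors grows by $2^{\ell}$ and the dimension becomes roughly $2^{r}$, so to preserve the target time budget (after composing with a hypothetical $(1+\delta)$-approximate CP algorithm running in $N^{2-\epsilon}$ and comparing against the IP lower bound at $\epsilon/2$) one needs $\ell + r \lessapprox \epsilon \log N$, and the resulting multiplicative gap $\delta$ for CP is polynomially related to $s$ through the $\|\cdot\|_2^2$ expansion.

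The key new ingredient is an MA communication protocol for IP with improved parameters, inspired by Chen's protocol (Theory of Computing, 2020), carefully run over a small alphabet. The idea is to view the $d$ coordinates as a grid $[q]^{m}$ with $q=\polylog d$ and $m=(\log d)/(\log q)$, extend $a$ and $b$ to low-degree polynomials $\tilde a,\tilde b$ over a small finite field $\mathbb{F}$ with $|\mathbb{F}| = \tilde\Theta(c_{\iP}/\epsilon)$, and execute one round of the sum-check protocol: Merlin sends the univariate polynomial $g(x)=\sum_{y\in[q]^{m-1}}\tilde a(x,y)\,\tilde b(x,y)$, Arthur verifies that $\sum_{x\in[q]}g(x)=\sigma$ and draws a random $x^{\star}\in\mathbb{F}$, and the residual check $g(x^{\star})=\sum_{y\in[q]^{m-1}}\tilde a(x^{\star},y)\,\tilde b(x^{\star},y)$ is folded into the CP embedding so that it becomes a coordinate-wise matching between $\tilde a_\pi$ and $\tilde b_\pi$. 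Because $\mathbb{F}$ is small, Merlin's message takes $\polylog(d)$ bits, Arthur's randomness is $\log|\mathbb{F}|=O(\log(c_{\iP}/\epsilon))$ bits, and the soundness error is $\deg(\tilde a\tilde b)/|\mathbb{F}|=\tilde\Theta(\epsilon/c_{\iP})$.

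Plugging this protocol into the Chen--Williams reduction, the soundness gap of $\tilde\Theta(\epsilon/c_{\iP})$ translates into an \emph{additive} gap of the same order between the ``yes'' and ``no'' values of $\ip{\tilde a_\pi}{\tilde b_\pi}$, and the expansion $\|\tilde a_\pi-\tilde b_\pi\|_{2}^{2}=\|\tilde a_\pi\|_{2}^{2}+\|\tilde b_\pi\|_{2}^{2}-2\ip{\tilde a_\pi}{\tilde b_\pi}$ then turns this into a multiplicative gap of $1+\tilde\Theta((\epsilon/c_{\iP})^{2})$ for Euclidean CP, matching the claimed $\delta$. The main obstacle, and the place where the argument improves on Chen--Williams' $\delta\approx(\epsilon/c_{\iP})^{6}$ bound, is the delicate balancing of parameters in the MA protocol: the field $\mathbb{F}$ must be small enough that Arthur's randomness fits within $\epsilon\log N$ bits, the polynomial degree low enough that the soundness gap is only one polynomial factor away from the final target $\delta$, and the grid parameter $q$ large enough that Merlin's proof stays within $o(\log N)$ bits. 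Prior work paid additional polynomial losses because its MA protocol required a larger alphabet and more randomness, which had to be absorbed into the embedding via extra sampling; avoiding this absorption by working directly over a small alphabet is the technical heart of the proof.
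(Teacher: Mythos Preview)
There are two concrete gaps. First, after one round of sum-check on the grid $[q]^m$ with $q=\polylog d$, the residual $\sum_{y\in[q]^{m-1}}\tilde a(x^\star,y)\tilde b(x^\star,y)$ has $q^{m-1}=d/q=\Theta(c\log N/\polylog)$ terms, so Alice's and Bob's output vectors have length growing with $N$; feeding this into the Chen--Williams encoding lemma (which maps length-$T$, alphabet-$q'$ vectors to $\{0,1\}^{O(T^2{q'}^4)}$) yields $\delta\approx 1/((d/q)^2|\F|^4)$, which is not a constant in $N$ and in any case already carries an $(\epsilon/c)^4$ factor from $|\F|^4$. The paper instead views $a,b$ as $\tfrac{d}{T}\times T$ matrices with $T=\tilde\Theta(c/\epsilon)$, so that Merlin carries the long $d/T$ dimension and the output vectors have the short, $N$-independent length $T$. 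Second, your claim that the soundness gap $\tilde\Theta(\epsilon/c)$ ``squares'' via the $\ell_2^2$ expansion is incorrect: the identity $\|u-v\|_2^2=\|u\|_2^2+\|v\|_2^2-2\ip{u}{v}$ is linear in $\ip{u}{v}$, so an additive IP gap of order $g$ becomes a multiplicative squared-distance gap of the same order, not $g^2$. In the paper the soundness error is a fixed constant ($\le 0.98$), and the exponent $2$ in $(\epsilon/c)^2$ comes from the $T^2$ factor of the encoding lemma, giving $\delta=\Theta(1/(T^2q^4))$ with $T=\tilde\Theta(c/\epsilon)$ and $q=\polylog T$.

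The paper's actual innovation---what you gesture at with ``working directly over a small alphabet'' but do not implement---is to shrink the output alphabet from $q=\Theta(T)$ (as in Chen--Williams, which is why that work only gets $(\epsilon/c)^6$) down to $q=\polylog T$. This is achieved not by a multivariate sumcheck over a single field of size $\tilde\Theta(c/\epsilon)$ (that \emph{is} the Chen--Williams alphabet, not a smaller one), but by running the skewed-matrix AG-code protocol simultaneously over $O(\log T)$ distinct small primes $p_\ell\in[t,t^2]$ with $t^t=T$: any nonzero integer in $[-T,T]$ is nonzero modulo all but at most $t$ of these primes, so soundness survives while every output coordinate lies in $\{0,\dots,t^2\}$. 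This multi-prime trick is precisely Chen's idea; your one-round sumcheck over a single field does not use it, and so cannot deliver the claimed improvement even if the rest of the reduction were repaired.
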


Recall that it is known how to solve the IP problem in time $N^{2-\epsilon}$ for dimension $d = c \log N$ with 
$c\approx 1/\epsilon$, and so it must hold that $c_{\iP}  \gtrapprox 1/\epsilon$. If we assume that $c_{\iP} \approx 1/\epsilon$, then the above theorem gives a dependence of $\delta$ on $\epsilon$ of the form $\delta \approx \epsilon^4 $, which is very close to the dependence of $\delta \approx \epsilon^3$ given by the best known algorithm. Moreover, improving the dependence in the above theorem to $\delta \approx \frac \epsilon {c_{\iP}}$ would imply an algorithmic improvement on the IP problem. We leave the question of determining the exact dependence of $\delta$ on $\epsilon$ as an interesting open problem for future research.

By known reductions, the above theorem gives a similar improvement for a host of other problems in fine-grained hardness of approximation such as closest pair with respect to other metrics such as Hamming,  $\ell_p$-norm for any constant $p>0$, and edit distance, Furthest Pair and approximate nearest neighbor in these metrics, and additive approximations to Max-IP and Min-IP, see Section \ref{sec:apps} for more details. 

Finally, we remark that the above theorem also holds under OVC (or SETH), but is less meaningful, since as discussed above, for the OV problem we have that $c_{\ov} \geq \exp(1/\epsilon)$. 

\subsection{Proof overview}

Next we give an overview of our proof method, and how it improves on prior work. To this end, we first describe the general framework presented in \cite{ARW17} for obtaining fine-grained hardness of approximation results based on MA communication protocols. Then we discuss the work of Rubinstein \cite{Rub18} who relied on this framework to give the first fine-grained hardness of approximation result for CP, albeit with an exponential dependence of $\delta$ on $\epsilon$, and the work of Chen and Williams \cite{CW19} who improved this dependence to polynomial. Following this, we turn to discuss our proof method that obtains a tighter polynomial relation.

\paragraph{Fine-grained hardness of approximation via MA communication \cite{ARW17}:}  In a Merlin-Arthur (MA) communication protocol for a function $f: \{0,1\}^d \times \{0,1\}^d \to \{0,1\}$, two players
Alice and Bob wish to compute
$f(a,b)$, where Alice is given as input only
$a \in\{0,1\}^d$, and Bob is given as input only $b \in \{0,1\}^d$.
To this end, Alice and Bob engage in a randomized (public coin) communication protocol, where their goal is to use as little communication as possible. 
To aid them with this task, there is also  a (potentially malicious)
prover Merlin who sees Alice's and Bob's inputs, and before any communication
begins Merlin sends Alice a short message $m$, which can be thought of as a ``proof'' or ``advice''. 
The requirement is that if $f(a,b)=1$, then there must exist
some message $m$ from Merlin on which Alice accepts with probability
$1$. Otherwise, if $f(a,b)=0$, then for any possible message $\tilde m$ from Merlin, Alice accepts with probability
at most $\frac 12$ on $\tilde m$.\footnote{The rejection probability can be increased by executing the communication phase between Alice and Bob independently for multiple times and rejecting if and only if all invocations reject.} 

In \cite{ARW17}, it was shown that an efficient MA communication protocol for \emph{set disjointness}\footnote{Recall that \emph{set disjointness} is the function $\mathrm{disj}:\{0,1\}^d \times \{0,1\}^d \to \{0,1\}$ which satisfies that $
\mathrm{disj}(a,b)=1$ if and only if the supports of $a$ and $b$ are disjoint, i.e., if $\ip a b =0$.} implies a fine-grained reduction from OV to an approximate version of Max-IP in which given two sets $A, B\subseteq\{0,1\}^d$ of cardinality $N$ each, the goal is to output a number sufficiently close to $M:=\max_{a \in A, b \in B} \ip a b$.

To see how a reduction as above can be constructed, suppose that there exists an MA communication protocol for set disjointness with Merlin's message length $L$, communication complexity $\cc$ between Alice and Bob, and randomness complexity $R$. Suppose furthermore that we are given an instance $A,B \subseteq \{0,1\}^d$ of OV, where $|A|=|B|=N$. Then for each possible Merlin's message $m \in \{0,1\}^L$, we construct an instance $A_m,B_m \subseteq \{0,1\}^{ 2^{\cc+R} }$ of Max-IP, where $|A_m| = |B_m|=N$.

Fix $m \in \{0,1\}^L$. Then the set $A_m$ is obtained from $A$ by mapping each element $a \in A$  to a binary vector $a_m$ that contains an entry for each possible transcript $\Gamma \in \{0,1\}^{\cc}$ and randomness string $r \in \{0,1\}^R$ (so $a_m$ has length $2^{\cc+R}$), and whose $(\Gamma, r)$-entry equals $1$ if and only if $\Gamma$ is consistent with $r$ and $a$, and Alice accepts on input $a$, randomness string $r$, and transcript $\Gamma$.
The set $B_m$ is obtained analogously from $B$. 

Then the main observation is that for some $m \in \{0,1\}^L$, we have that the $(\Gamma,r)$-entry of both $a_m$ and $b_m$ equals $1$ if and only if Alice accepts on Merlin's message $m$, inputs $a$ and $b$, and randomness string $r$. Consequently, if there exists $(a,b) \in A \times B$ so that $\ip a b =0$ (i.e., $f(a,b)=1$), then there exists a Merlin's message $m$ on which Alice accepts with probability $1$ on inputs $a$ and $b$, and consequently we have that the corresponding vectors
$(a_m,b_m) \in A_m \times B_m$ satisfy that $\ip {a_m} {b_m} = 2^R$. 
On the other hand, if 
$\ip a b \neq 0$ (i.e., $f(a,b) =0$) for any $(a,b) \in A \times B$, then for any Merlin's message $\tilde m$, and on any inputs $(a,b) \in A \times B$, Alice accepts with probability at most $\frac 1 2$, and so $\ip {a_{\tilde m}} {b_{\tilde m}} \leq \frac 1 2 \cdot 2^R$ for any pair
$(a_{\tilde m},b_{\tilde m}) \in A_{\tilde m} \times B_{\tilde m}$. This gives the desired gap, showing that OV reduces to $2^L$ instances of approximate Max-IP. 

To obtain a fine-grained reduction, one must make sure that $\cc$, $R$ and $L$ are not too large,
so that the total construction time of the reduction is at most $N^{\epsilon}$. To achieve this, one can use the MA communication protocol of Aaronson and Wigderson \cite{AW09} for set disjointness in which all these quantities are upper bounded by $\approx \sqrt{d}$. 

For $d = c \log N$, this gives that $2^\cc$, $2^R$, and $2^L$ are all upper bounded by $2^{\tilde O(\sqrt{ \log N})} \ll N^{\epsilon}$, and so the reduction can be constructed in time  $N^{\epsilon}$. 
Next we describe the MA communication protocol of \cite{AW09}, as hardness of approximation results for CP (including ours) crucially rely on its properties.

\paragraph{MA communication protocol for set disjointness \cite{AW09}:} The MA communication protocol for set disjointness of Aaronson and Wigderson \cite{AW09} relies on the influential sumcheck protocol of \cite{LFKN92}, and it proceeds as follows.

Let $a \in\{0,1\}^{d}$ be Alice's input. 
Slightly abusing notation, we view $a$ as a $\sqrt{d} \times \sqrt{d}$ binary matrix in the natural way, and we let $\hat a$ denote the $p \times \sqrt{d}$ matrix obtained by encoding each column of $a$ with a systematic Reed-Solomon code $\mathrm{RS}_{\sqrt{d}, p} : \F_p^{\sqrt{d}} \to \F_p^{p}$ of degree $\sqrt{d}$ over a prime field of size $p \approx 4\sqrt{d}$.\footnote{The systematic Reed-Solomon code $\mathrm{RS}_{d,p}: \F_p^d \to \F_p^p$ of degree $d$ over a prime field of size $p>d$ is a linear map, defined as follows. To encode a message $m = (m(0),\ldots, m(d-1))  \in \F_p^d$, one finds the (unique) degree $d-1$ polynomial $P_m(X) \in \F_p[X]$ which satisfies that $P_m(i) = m(i)$ for any $i = 0, \ldots , d-1$, and lets $\mathrm{RS}_{d,p}(m)= (P_m(0), \ldots, P_m(p-1))$. The code is called \emph{systematic} since the message is a prefix of its encodings.
The code has distance at least $p-d+1$ since any pair of distinct degree $d-1$ polynomials can agree on at most $d-1$ points. }
 Let $\hat b$ be defined analogously.

In the protocol, Merlin first computes the pointwise product $\hat a \star \hat b \in \F_p^{p\times \sqrt{d}}$, and then sends Alice the sum $m \in \F_p^p$ of the columns of $\hat a \star \hat b$ (where arithemtic is performed mod $p$). Alice first checks that $m$ is a codeword of $\mathrm{RS}_{2\sqrt{d},p}$, and that the first $\sqrt{d}$ entries of $m$ are all zero, otherwise she rejects and aborts.
Then Alice and Bob jointly sample a random index $i \in [p]$, Bob sends Alice the $i$'th row of $\hat b$, Alice computes its inner product with the $i$'th row of $\hat a$, and accepts if and only if this product equals $m(i)$ (where once more, arithmetic is performed mod $p$).
 
 To see that the protocol is complete, note first that if $a$ and $b$ are disjoint, then $a \star b$ is the all-zero matrix. Consequently, by the systematic property of the Reed-Solomon encoding, the first $\sqrt{d}$ rows of $\hat a \star \hat b$ are also identically zero, which implies in turn that the first $\sqrt{d}$ entries of $m$ are identically zero. Furthermore, since the product of
two polynomials of degree at most $\sqrt{d}$ is a polynomial of degree at most $2 \sqrt{d}$, it follows that $m$ is a codeword of $\mathrm{RS}_{2\sqrt{d},p}$. Thus, both Alice's checks will clearly pass. It can also be verified that by construction, the inner product of the $i$'th rows of $\hat a$ and $\hat b$ equals $m(i)$, and so Alice accepts with probability $1$.

To show soundness, suppose that $a$ and $b$ intersect, and let $\tilde m$ denote Merlin's message. We may assume that $\tilde m$ is a codeword of $\mathrm{RS}_{2\sqrt{d},p}$, and that the first $\sqrt{d}$ entries of $\tilde m$ are all zero, since otherwise Alice clearly rejects. But on the other hand, since $a$ and $b$ intersect, then $a \star b$ has a $1$-entry, say in the $j$-th row, and since $p>\sqrt{d}$, the sum of entries in the $j$'th row of $a \star b$ is non-zero mod $p$, which implies in turn that $m(j) \neq 0$. Thus, we conclude that $\tilde m$ and $m$ are distinct codewords of $\mathrm{RS}_{2\sqrt{d},p}$ -- a code of distance at least $\frac p 2$ -- and so they must differ by at least $\frac 1 2$ of their entries. But this implies in turn that with probability at least $\frac 1 2$ over the choice of $i$, it holds that 
$\tilde m(i)\neq m(i)$, in which case the inner product of the $i$'th row of $\hat a$ and $\hat b$ will be different than $\tilde m(i)$, which will cause Alice to reject.

Finally, it can also be verified that in this protocol, $\cc$, $R$, and $L$ are all upper bounded by $\approx \sqrt{d}$.

\paragraph{Hardness of approximation for CP with exponential dependence \cite{Rub18}:}

In \cite{Rub18}, Rubinstein utilized the above framework to show fine-grained hardness of approximation for CP. 
The starting point of \cite{Rub18}  is a simple linear-time reduction
from $\delta$-additive approximation for Max-IP\footnote{In a $\delta$-\emph{additive approximation} for Max-IP, given $A, B\subseteq\{0,1\}^d$ of cardinality $N$ each, the goal is to output a number in $[M - \delta \cdot d, M]$, where $M:=\max_{a \in A, b \in B} \ip a b$.}  to an $(1+\Theta(\delta))$-approximation for (Euclidean) CP. Thus, to show that no algorithm can find an $(1+\Theta(\delta))$-approximation for (Euclidean) CP in time $N^{2-\epsilon}$, it suffices to show that no algorithm can find a $\delta$-additive approximation for Max-IP in time $N^{2-\epsilon}$. 

The \cite{ARW17} framework discussed above generates instances of Max-IP of dimension $2^{\cc+R}$ and additive gap  of $\frac 1 2\cdot 2^R$, which gives $\delta:=\Theta(2^{-\cc})$. However, the MA protocol of \cite{AW09} described above only gives $\cc \approx \sqrt{d} $ which is super-constant for a super-constant dimension $d$, and consequently only yields a \emph{sub-constant} $\delta$.

To deal with this, \cite{Rub18} first utilized the fact (previously utilized also in \cite{ARW17}) that the \cite{AW09} protocol described above works equally well on skewed matrices of dimensions $\frac d T \times T$, in which case we have that $L=\frac d T \cdot \log p$ and $\cc= T \cdot \log p$. 
Thus, assuming $d = c \log N$, to achieve $2^L \leq N^{\epsilon}$, one can set $T = \frac c \epsilon \cdot \log p $, which gives in turn $\cc=\frac c \epsilon \cdot \log ^2p$.

However, this is still not quite enough since the MA protocol of \cite{AW09} requires setting $p >  \sqrt{d}$ because of the use of Reed-Solomon codes that are only defined over a large alphabet, and consequently the communication complexity is still super-constant. However, the main observation in \cite{AW09} is that the protocol can actually be executed using any error-correcting code with a \emph{multiplication property}\footnote{Informally, we say that a linear code $C:\F^k \to \F^n$ has a \emph{multiplication property} if the set $\spn\{C(m) \star C(m') \mid m,m' \in \F^k  \}$ has sufficiently large distance.}. 
Relying on this observation, Rubinstein replaced the Reed-Solomon codes in the protocol of \cite{AW09} with algebraic-geometric (AG) codes that satisfy the multiplication property over a constant-size alphabet.
This reduced the communication complexity to $ \approx c/\epsilon$, yielding in turn an approximation factor of $\delta = 2^{- \tilde \Theta (c/\epsilon)}$. 

\paragraph{Polynomial dependence \cite{CW19}:}

While \cite{Rub18} gave the first non-trivial hardness of approximation result for CP, a downside of this result was that the approximation factor $\delta$ depended exponentially on the running time parameter $\epsilon$. In the follow-up work \cite{CW19}, Chen and Williams showed how to reduce this dependence to just polynomial.

The main observation of Chen and Williams was that instead of thinking of the output of the MA protocol of  \cite{Rub18} as being just accept or reject, one can view the output as being \emph{short vectors} $a',b' \in \F_p^T$ (namely, the $i$'th row of $\hat a, \hat b$, respectively), and $\sigma' \in \{0,1,\ldots, d\}$ (namely, the $i$'th entry of Merlin's message $m$), where $a'$ only depends on Alice's input $a$ and the randomness string, $b'$ only depends on Bob's input $b$ and the randomness string, and $\sigma'$ only depends on Merlin's message and the randomness string. The requirement then is that if $\ip a b =0$ for some $(a,b) \in A \times B$,
then for some Merlin's message $m$,
$\ip {a'} {b'} = \sigma'$ with probability $1$, while if $\ip a b \neq 0$ for any $(a,b) \in A \times B$,
then for any Merlin's message $\tilde m$,
then $\ip {a'} {b'} \neq \sigma'$ with probability at least $\frac 1 2$.

Chen and Williams then suggested to create an instance $A_m, B_m$  for any Merlin's message $m$, where the set $A_m$  is obtained from $A$ by simply mapping each element $a \in A$ to a vector $a_m \in \F_p^{T \cdot 2^R}$ that is the concatenation of all possible output vectors $a'$ for all possible randomness strings, and analogously for $B_m$. The advantage is that now 
the dimension of the vectors in $A_m$ and $B_m$ is much shorter than in  
\cite{Rub18}. However,  a disadvantage is that now the alphabet is not binary anymore, and an even more serious problem is that the soundness guarantee is only that $\ip {a'} {b'} \neq \sigma'$, so the reduction does not seem to produce any gap.

To deal with these issues, Chen and Williams use an \emph{encoding lemma} which gives mappings $g,h$ and a value $\Gamma$, where $g,h$, and $\Gamma$ only depend on $p$ and $T$, so that $g(a',\sigma')$ and $h(b',\sigma')$ are binary vectors of length $\poly(p, T)$ satisfying that if $\ip {a'} {b'} = \sigma'$ then $\ip {g(a',\sigma')} {h(b',\sigma')} =\Gamma$, while if $\ip {a'} {b'} \neq \sigma'$ then $\ip {g(a',\sigma')} {h(b',\sigma')} < \Gamma$ (see Lemma \ref{lem:enc} for a formal statement). This produces the desired additive gap, on the order of $\Omega(2^R)$. Since the encoding lemma increases the dimension of the vectors only by a factor of $\poly(p,T)$, this yields an approximation factor of $\delta =\frac{1} {\poly(p,T)}=  \poly(\frac \epsilon c)$.

We note that a delicate issue that should be dealt with in the reduction is that the encoding lemma works over the integers, while the protocol works over finite fields, and in particular, over non-prime fields, as AG codes are only known to exist over non-prime fields. Additionally, Chen and Williams show that the reduction works equally well when using the IP problem instead of OV as its starting point, and using an MA communication protocol for IP similar to that of \cite{AW09}. This can potentially allow for a smaller value of $c$ as it is only known how to solve (exact) IP in time 
 $N^{2-\epsilon}$ up to a dimension of $c \log N$ for $c \approx 1/\epsilon$.

\paragraph{This work - tighter polynomial dependence:} While \cite{CW19} obtained a polynomial dependence of $\delta$ on $\epsilon$, the dependence was quite small, on the order of $\delta \approx  (\frac \epsilon c)^6$, and in the current work we show how to improve the dependence to $\delta \approx (\frac  \epsilon c)^2$.

To this end, we first observe that one reason for the small polynomial dependence obtained in \cite{CW19} was the large polynomial dependence of the dimension of the resulting vectors in the encoding lemma on the alphabet size $p$. While in the protocol of \cite{Rub18} the field size $p$ can be made constant using AG codes, the field still needs to be of characteristic at least $T$, since otherwise the sum of entries in a non-zero row of $a \star b$ may sum to zero over $\F_p$, and consequently the soundness analysis will not go through.

To reduce the alphabet size, we first design a new MA protocol in which the alphabet size is only \emph{polylogarithmic} in $T$ (see Theorem \ref{thm:ma}).
This protocol is inspired by the MA protocol of \cite{Che18} for IP which achieved communication complexity $O(\sqrt{d \log d\log \log d})$, improving on the communication complexity of $O(\sqrt{d} \log d)$ of \cite{AW09}.
In a nutshell, Chen's idea was to execute the original MA protocol of  \cite{AW09} multiple times over different small prime fields, hoping that if $\ip a b \neq 0$, then $\ip a b$ is also non-zero modulo many of the primes, and so the protocol will be executed correctly.
Chen showed that this is indeed possible to achieve using $O(\log d)$ distinct primes of cardinality at most $\polylog(d)$ each.

We observe that for skewed matrices of  dimensions $\frac d T \times T$, it in fact suffices to execute the protocol with $O(\log T)$ distinct primes of cardinality at most $\polylog(T)$ each. While this choice does not necessarily guarantee the property above that if $\ip a b \neq 0$, then $\ip a b$ is also non-zero modulo many of the primes, this turns out to still suffice for a correct execution of the protocol. 

We then further observe that such a protocol can be used in the framework of \cite{CW19} to obtain an improved hardness of approximation result for Max-IP. Once more, a delicate issue is how to use the encoding lemma in the presence of many different non-prime fields.   

To the best of our knowledge, this is the first use of the techniques underlying the improved MA protocol of \cite{Che18} for showing a fine-grained hardness of approximation result.

\paragraph{Paper organization.} The rest of the paper is organized as follows. We begin in Section 2 below with the required notation and terminology with respect to fine-grained complexity problems and error-correcting codes. Then in Section \ref{sec:ma} we present our improved MA protocol over a small alphabet, while in Section
\ref{sec:ip_to_apx_max_ip} we show how to use this protocol for obtaining an improved reduction from IP to approximate Max-IP. Finally, in Section \ref{sec:apps} we show implications of our latter result to showing hardness of approximation results for closest pair, as well as other related problems in fine-grained complexity.

\section{Preliminaries}\label{sec:prelim}

We start by setting some general notation.
For a positive integer $d$, we let $[d]:=\{1,2, \ldots, d\}$.  For convenience, we often view a vector $a \in \Sigma^d$ as a function $a:[d] \to \Sigma$, and we let $a(i)$ denote the $i$-th entry of $a$. 
For a pair of vectors $a,b \in \N^d$, we let $\ip a b := \sum_{i=1}^d a(i) \cdot b(i)$ denote their inner product,  and we let $a \star b \in \N^d$ denote their pointwise product, given by $(a \star b)(i) = a(i) \cdot b(i)$ for $i \in [d]$. For $a,b \in \Sigma^d$, we let $\Delta(a,b):=|\{i \in [d] \mid a(i) \neq b(i)\}|$ denote their \textsf{Hamming distance}.
For an $n \times k$ matrix $A$ and $i\in [n]$ ($j \in [k]$, respectively),
we let $\row_i(A)$ ($\col_j(A)$, respectively) denote the $i$-th row ($j$-th column, respectively) of $A$.

\subsection{Problems in fine-grained complexity}

Below we list the main fine-grained problems that we will be concerned with in this paper.

\begin{defn}[Inner Product ($\IP$)] In  the \textsf{ inner product $\IP_{N,d}$ problem},
given two sets $A, B\subseteq\{0,1\}^d$
of cardinality $N$ each, and an integer $\sigma \in \{0,1,\ldots, d\}$, the goal is to determine whether there exists a pair $(a,b)\in A\times B$ satisfying that $\ip a b = \sigma$.
\end{defn}

\begin{defn}[Maximum  Inner Product ($\Max$-$\IP$)] In  the \textsf{maximum inner product $\Max$-$\IP_{N,d}$ problem},
given two sets $A, B\subseteq\{0,1\}^d$
of cardinality $N$ each, the goal is to compute 
$M:=\max_{a \in A, b \in B} \ip a b.$
\end{defn}

For the approximate version of $\Max$-$\IP$, defined next, we will consider the less standard \emph{additive} approximation version that will be useful for obtaining  hardness of approximation  for the closest pair problem.

\begin{defn}[Approximate Maximum Inner Product ($\Apx$-$\Max$-$\IP$)] 
Let $\delta >0$ be a parameter.
In  the \textsf{(additive) approximate maximum inner product $\delta$-$\Apx$-$\Max$-$\IP_{N,d}$ problem},
given two sets $A, B\subseteq\{0,1\}^d$ of cardinality $N$ each, the goal is to output a number in $[M-\delta\cdot d, M]$, where $M:=\max_{a \in A, b \in B} \ip a b$.
\end{defn}

\begin{defn}[Closest Pair ($\CP$)] 
Let $\dist:\{0,1\}^d \times\{0,1\}^d \to\field R^+$ be a distance function. In the \textsf{closest pair  $\CP_{N,d,\dist}$ problem},
given two sets $A, B\subseteq\{0,1\}^d$
of cardinality $N$ each, the goal is to compute 
$M:=\min_{a \in A, b \in B} \dist(a,b).$
\end{defn}

\begin{defn}[Approximate Closest Pair ($\Apx$-$\CP$)] 
Let $\dist:\{0,1\}^d \times\{0,1\}^d \to\field R^+$ be a distance function, and let $\delta>0$ be a parameter. In the \textsf{approximate closest pair  $\delta$-$\Apx$-$\CP_{N,d,\dist}$ problem},
given two sets $A, B\subseteq\{0,1\}^d$
of cardinality $N$ each, the goal is to output a number
in $[M, (1+\delta)M]$, where 
$M:=\min_{a \in A, b \in B} \dist(a,b).$
\end{defn}

\subsection{Error-correcting codes}

Our reduction will make use of error-correcting codes. In what follows, we first present some general notation and terminology with respect to error-correcting codes, and then describe the kind of codes we shall use for our reduction.

Let $\Sigma$ be a finite alphabet, and $k, n$ be positive integers (the \textsf{message length} and the
\textsf{codeword length}, respectively). An \textsf{(error-correcting) code} is an injective map $C: \Sigma^k \to \Sigma^n$. The elements in the domain of $C$ are called \textsf{messages}, and the elements in the image of $C$ are called \textsf{codewords}.
We say that $C$ is \textsf{systematic} if the message is a prefix of
the corresponding codeword, i.e., for every $x \in \Sigma^k$ there
exists $z \in \Sigma^{n-k}$ such that $C(x) = (x,z)$.
The \textsf{rate} of a code $C: \Sigma^k \to \Sigma^n$ is the ratio
$\rho := \frac{k} {n}$.  The \textsf{relative distance} $\dist(C)$ of
$C$ is the maximum $\delta >0$ such that for every pair of distinct
messages $x,y\in \Sigma^k$ it holds that $\Delta(C(x),C(y))\ge\delta$.

If $\Sigma=\F$ for some finite field $\F$, and $C$ is a linear map
between the vector spaces $\F^k$ and $\F^n$ then we say that $C$ is
\textsf{linear}. The \textsf{generating matrix} of a linear code
$C: \F^k \to \F^n$ is a matrix $G \in \F^{n \times k}$ such that
$C(x) = G\cdot x$ for any $x\in \F^k$. We say that a linear code $C$ is \textsf{explicit} if 
$G$ can be generated in time $\poly(n)$. 

For our reduction, we shall require linear codes over a small (constant-size, independent of the codeword length) alphabet, satisfying the \emph{multiplication property}, which informally says that all pointwise products of pairs of codewords span a code of large distance. 
Such codes can be obtained from the AG codes of \cite{SAKSD01} (see also \cite[Theorem 2.4]{Rub18}).

\begin{thm}[\cite{SAKSD01}; \cite{Rub18}, Theorem 2.4]\label{thm:ag}
There exists a constant integer $p_0$ so that for any prime $p \geq p_0$, there exist two explicit code families $\calC=\{C_k\}_{k \in \N}$ and $\calC_{\star}= \{(C_{\star})_k\}_{k \in \N}$ so that the following hold for any $k \in \N$:
\begin{itemize}
\item $C_k, (C_\star)_k$ are  systematic linear codes over $\F_{p^2}$ of  relative distance at least $0.1$ and rate at least $0.1$.
\item $C_k$ has message length $k$.
\item For any $x,y\in (\F_{p^2})^k$, $C_k(x) \star  C_k(y)$ is a codeword of $(C_\star)_k$. 
\end{itemize}
\end{thm}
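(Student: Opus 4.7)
The plan is to construct both $C_k$ and $(C_\star)_k$ as algebraic geometric (AG) codes arising from the explicit Garcia--Stichtenoth tower of function fields over $\F_{p^2}$, which attains the Drinfeld--Vladut bound $n/g \to p-1$ as one climbs the tower. First, I would pick a level of the tower supplying a function field $F$ with $n$ rational places $P_1, \ldots, P_n$ and genus $g$ satisfying $g/n \leq 0.01$; this is precisely where the hypothesis $p \geq p_0$ enters, as $p_0$ must be large enough that the limiting ratio allows such levels to exist for all sufficiently large $n$. Then I would fix a rational divisor $G$ of degree $\deg(G) = k + g - 1 \leq 0.4 n$ supported away from the $P_i$, and define $C_k$ as the standard AG code $C_L(D, G) = \{(f(P_1), \ldots, f(P_n)) : f \in L(G)\}$. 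By Riemann--Roch, $\dim C_k = k$, giving rate at least $0.1$, while the distance is at least $n - \deg(G) \geq 0.1 n$.

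For the multiplication property I would invoke the elementary inclusion $L(G) \cdot L(G) \subseteq L(2G)$: if $f, f' \in L(G)$, then the pole divisor of $ff'$ is bounded by $2G$. Evaluating at the $P_i$ yields $C_k \star C_k \subseteq C_L(D, 2G)$, so I define $(C_\star)_k := C_L(D, 2G)$. Its distance is at least $n - 2\deg(G) \geq 0.1 n$, and its rate is at least $(2\deg(G) - g + 1)/n \geq 0.1$, again by Riemann--Roch applied to $2G$ (whose degree still satisfies the nondegeneracy condition by our choice of parameters).

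To ensure the systematic property, I would apply a standard basis-change trick: since $C_k$ has dimension $k$ inside $\F_{p^2}^n$, some $k$-element subset of coordinates projects bijectively onto $\F_{p^2}^k$ (equivalently, the generator matrix has an invertible $k \times k$ submatrix), so after permuting coordinates to place this subset first and pre-composing the encoder with the inverse of the restriction, the message appears as a prefix of its codeword. The same coordinate permutation is then applied to $(C_\star)_k$, which preserves the inclusion $C_k \star C_k \subseteq (C_\star)_k$ because pointwise multiplication is coordinate-wise. Explicitness follows from the polynomial-time algorithms known for constructing levels of the Garcia--Stichtenoth tower and computing Riemann--Roch bases. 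The main obstacle will be verifying that the four numerical inequalities (rate $\geq 0.1$ and relative distance $\geq 0.1$ for each of $C_k$ and $(C_\star)_k$) can be met simultaneously; this forces an explicit lower bound on $p_0$ through the Drinfeld--Vladut ratio and pins down the allowable range for $\deg(G)/n$.
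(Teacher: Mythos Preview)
The paper does not supply its own proof of this theorem: it is quoted as a known result, with attribution to \cite{SAKSD01} and to \cite{Rub18}, Theorem~2.4, and no argument is given in the text. So there is nothing in the paper to compare your proposal against line by line.

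That said, your outline is the standard construction underlying those citations and is essentially correct. Using the Garcia--Stichtenoth tower over $\F_{p^2}$ to get $g/n$ arbitrarily small for large $p$, taking $C_k = C_L(D,G)$ and $(C_\star)_k = C_L(D,2G)$, and invoking $L(G)\cdot L(G)\subseteq L(2G)$ is exactly the right mechanism; the Riemann--Roch bookkeeping you sketch does yield simultaneous rate and relative distance at least $0.1$ for both codes once $k/n$ is pinned in a suitable interval (roughly $0.1\le k/n\le 0.44$ with $g/n\le 0.01$), and explicitness follows from the known polynomial-time algorithms for the tower.

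One small point to tighten: you make $C_k$ systematic by permuting to an information set and pre-composing with the inverse of the restriction, and then say you apply ``the same coordinate permutation'' to $(C_\star)_k$ to preserve the inclusion $C_k\star C_k\subseteq (C_\star)_k$. That preserves the inclusion, but it does not by itself make $(C_\star)_k$ systematic, which the theorem also asserts. The fix is easy: since $C_L(D,G)\subseteq C_L(D,2G)$, first pick an information set $S'$ of size $k'=\dim(C_\star)_k$ for $(C_\star)_k$; the restriction of $C_k$ to $S'$ is injective, so inside $S'$ there is an information set $S$ of size $k$ for $C_k$. Place $S$ first, then $S'\setminus S$, then the remaining coordinates, and change both encoders accordingly. (In fact the paper only ever uses the systematic property of $C_k$, not of $(C_\star)_k$, so this wrinkle is immaterial for the downstream application.)
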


\section{MA protocol for IP over a small alphabet}\label{sec:ma}

In this section, we will provide an MA protocol for IP over a small alphabet. The protocol will later be used in Section \ref{sec:ip_to_apx_max_ip} below to show a reduction from IP to Apx-Max-IP. A  protocol with similar guarantees was implicitly given in \cite{CW19}, albeit with an exponentially larger alphabet of  $q=\Theta(T)$. 

\begin{thm}[MA Protocol for $\IP$ over a small alphabet]\label{thm:ma} 
For any sufficiently large integer $T$, there is an integer  $q=O(\log ^2 T)$, so that for 
 any integer $d$ which is a multiple of $T$ there is an $\ma$ Protocol which satisfies the following:
\begin{enumerate}
\item Alice is given as input a vector $a \in\{0,1\}^{d}$  and an integer $\sigma \in \{0,1,\ldots,d\}$,
Bob is given as input a vector $b \in \{0,1\}^{d}$, and Merlin is given as input
$a$, $b$, and $\sigma$.
\item Merlin sends Alice a message $m$ of (bit) length $L=O(\frac d T \cdot \log^2 T)$.
 Alice reads Merlin's message, and based on this message and $\sigma$, decides whether to reject and abort, or continue.
\item Alice and Bob sample a joint random string $r$ of (bit) length $R=\log(\frac{d}{T})+ \log \log T+O(1)$.
\item Alice outputs  a string
$a' \in \{0,1,\ldots,q\}^T$ and an integer $\sigma' \in \{0,1,\ldots, T \cdot q^2\}$, where $a'$ only depends on Alice's input $a$ and the randomness string $r$,  
and $\sigma'$ only depends on Merlin's message $m$ and $r$,  and Bob outputs a string $b' \in \{0,1,\ldots,q\}^T$, which only depends on Bob's input $b$ and  $r$, so that the following hold:
\begin{itemize}
\item (Completeness) If $\ip a b= \sigma$, then on Merlin's message $m$, Alice and Bob output $a', b'$, and $\sigma'$ so that $\ip {a'} {b'} = \sigma'$ with probability $1$. 
\item (Soundness) If $\ip a b \neq \sigma$, then for any Merlin's message $\tilde m$, Alice and Bob output $a', b',$ and $\sigma'$ so that $\ip {a'} {b'} = \sigma'$ with probability at most $0.98$.
\end{itemize}

Moreover, the running time of both Alice and Bob is $\poly(d)$.
\end{enumerate}
\end{thm}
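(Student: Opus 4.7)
The plan is to parallelize the AG-code version of the Aaronson--Wigderson sumcheck protocol across $k = \Theta(\log T)$ small prime fields, following Chen's multi-prime strategy, while exploiting a row decomposition of the $(d/T)\times T$ skewed matrix structure. View $a, b \in \{0,1\}^d$ as $(d/T) \times T$ matrices and write $\ip{a}{b} = \sum_{i=1}^{d/T} R_i$ where $R_i := \sum_j a(i,j) b(i,j) \in \{0,\ldots,T\}$. Pick $k = \Theta(\log T)$ distinct primes $p_1 < \cdots < p_k$ each at least $p_0$ (the threshold from Theorem \ref{thm:ag}) and of size $O(\log T)$, so that $q := \max_t(p_t^2-1) = O(\log^2 T)$. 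For each $t$, encode each column of $a$ (resp.\ $b$) via the systematic AG code $C_{d/T} : \F_{p_t^2}^{d/T} \to \F_{p_t^2}^{n_t}$ of Theorem \ref{thm:ag} to obtain $\hat a^{(t)}, \hat b^{(t)}$ of dimensions $n_t \times T$ with $n_t = O(d/T)$; by the multiplication property, the column-sum $M^{(t)} := \sum_j \col_j(\hat a^{(t)} \star \hat b^{(t)})$ is a codeword of $(C_\star)_{d/T}$ over $\F_{p_t^2}$, and its first $d/T$ entries equal $R_1,\ldots,R_{d/T}$ reduced modulo $p_t^2$.

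Merlin sends integers $\sigma_1,\ldots,\sigma_{d/T} \in \{0,\ldots,T\}$ with $\sum_i \sigma_i = \sigma$, and for every $(t,i) \in [k]\times[n_t]$ an integer lift $\tilde M^{(t)}(i) \in \{0,\ldots,T q^2\}$, totalling $L = O((d/T)\log^2 T)$ bits. Alice verifies that $\sum_i \sigma_i = \sigma$ as integers, that $\tilde M^{(t)}(i) \equiv \sigma_i \pmod{p_t^2}$ for every $t$ and every $i \leq d/T$, and that the residues $\tilde m^{(t)}(i) := \tilde M^{(t)}(i) \bmod p_t^2$ form a codeword of $(C_\star)_{d/T}$ over $\F_{p_t^2}$, aborting on any failure. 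Alice and Bob then sample $(t,i) \in [k]\times[n_t]$ uniformly, using $R = \log\log T + \log(d/T) + O(1)$ bits, and Alice outputs $a' \in \{0,\ldots,q\}^T$ equal to $\row_i(\hat a^{(t)})$ under the canonical lift $\F_{p_t^2} \hookrightarrow \{0,\ldots,p_t^2-1\}$, together with $\sigma' := \tilde M^{(t)}(i)$; Bob outputs $b'$ analogously. Completeness is immediate: an honest Merlin sets $\sigma_i = R_i$ and $\tilde M^{(t)}(i) = \ip{\row_i(\hat a^{(t)})}{\row_i(\hat b^{(t)})}$ computed over the integers, which forces $\sigma' = \ip{a'}{b'}$ with probability $1$.

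The crux is the soundness analysis, and in particular why $k = \Theta(\log T)$ primes suffice rather than the $\Theta(\log d)$ one would extract from a direct application of Chen's trick. Suppose $\ip{a}{b} \neq \sigma$ and Alice's checks all pass; then $\sum_i \sigma_i = \sigma \neq \sum_i R_i$ forces some $i^* \leq d/T$ with $\Delta := \sigma_{i^*} - R_{i^*} \neq 0$, and the essential bound is $|\Delta| \leq T$ (since $\sigma_{i^*}, R_{i^*} \in \{0,\ldots,T\}$). A ``bad'' prime $p_t$ with $p_t^2 \mid \Delta$ must be a prime factor of $\Delta$ with $p_t \leq \sqrt{T}$, so at most $\log_2 T$ primes are bad; a sufficiently large constant in $k = \Theta(\log T)$ thus keeps the fraction of bad primes below $0.01$. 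For each good $t$, $\tilde m^{(t)}(i^*) \neq m^{(t)}(i^*)$ in $\F_{p_t^2}$, so $\tilde m^{(t)}$ and $m^{(t)}$ are distinct codewords of $(C_\star)_{d/T}$ and disagree at a $\geq 0.1$ fraction of positions by Theorem \ref{thm:ag}; since $\ip{a'}{b'} \equiv m^{(t)}(i) \pmod{p_t^2}$ and $\sigma' \equiv \tilde m^{(t)}(i) \pmod{p_t^2}$, residue disagreement rules out integer equality, giving rejection with probability at least $0.99 \cdot 0.1 > 0.02$. The main subtlety the plan must handle throughout is the tension between the AG-code multiplication property (which lives in $\F_{p_t^2}$) and the target identity $\ip{a'}{b'} = \sigma'$ (which must hold over the integers for the downstream reduction to $\Apx$-$\Max$-$\IP$); this is resolved by making Merlin commit in advance to the integer lifts $\tilde M^{(t)}(i) \in \{0,\ldots,Tq^2\}$ while limiting Alice to residue-level checks, and by decomposing $\sigma = \sum_i \sigma_i$ with summands bounded by $T$ -- precisely the scale at which $O(\log T)$ small primes control the prime-power divisibility with constant probability.
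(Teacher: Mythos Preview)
Your high-level strategy is exactly the paper's: run the AG-code sumcheck in parallel over $\Theta(\log T)$ small primes, exploit the row decomposition so that the relevant discrepancy $|\Delta|\le T$ (not $d$), and conclude that only $O(\log T)$ primes can kill $\Delta$. That part is fine, and your observation about why $\Theta(\log T)$ primes suffice is the right one.

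The gap is in how you pass between integer arithmetic and $\F_{p_t^2}$ arithmetic. You repeatedly treat $\F_{p_t^2}$ as if it were $\Z/p_t^2\Z$: you speak of ``the canonical lift $\F_{p_t^2}\hookrightarrow\{0,\ldots,p_t^2-1\}$'', of reducing integers ``modulo $p_t^2$'' to land in $\F_{p_t^2}$, and of a bad prime being one with $p_t^2\mid\Delta$. But $\F_{p_t^2}$ is a degree-$2$ field extension of $\F_{p_t}$, not the ring $\Z/p_t^2\Z$; any set bijection $\F_{p_t^2}\leftrightarrow\{0,\ldots,p_t^2-1\}$ fails to respect multiplication. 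Concretely, for your honest Merlin the integer $\tilde M^{(t)}(i)=\sum_j \mathrm{lift}(\hat a^{(t)}(i,j))\cdot\mathrm{lift}(\hat b^{(t)}(i,j))$, when pushed back into $\F_{p_t^2}$, is \emph{not} equal to $M^{(t)}(i)=\sum_j \hat a^{(t)}(i,j)\cdot_{\F_{p_t^2}}\hat b^{(t)}(i,j)$, so Alice's codeword check on $(C_\star)_{d/T}$ will generally fail and completeness breaks. Symmetrically, the soundness step ``$\ip{a'}{b'}\equiv m^{(t)}(i)\pmod{p_t^2}$'' is unjustified. (Also, the image of an integer in $\F_{p_t^2}$ is determined by its residue mod $p_t$, not mod $p_t^2$, so the bad-prime condition should be $p_t\mid\Delta$; your count still works, but the stated condition is wrong.)

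The paper resolves exactly this tension by writing each element of $\F_{p_\ell^2}$ as $a_0+a_1 X$ with $a_0,a_1\in\F_{p_\ell}$ and splitting $\hat a^{(\ell)}$ into two $\F_{p_\ell}$-valued matrices $a^{(\ell,0)},a^{(\ell,1)}$. Merlin then sends, over the integers, the four cross-sums $m_{\ell,\alpha,\beta}=\sum_j \col_j(a^{(\ell,\alpha)})\star\col_j(b^{(\ell,\beta)})$ for $\alpha,\beta\in\{0,1\}$; since these lift elements of $\F_{p_\ell}=\Z/p_\ell\Z$, integer multiplication \emph{is} compatible with reduction mod $p_\ell$. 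Alice reconstructs $m_\ell\in(\F_{p_\ell^2})^{n_\ell}$ by combining the four pieces modulo the irreducible $Q_\ell$ and only then performs the codeword check in $C_\star^{(\ell)}$. The randomness additionally samples $\alpha_*,\beta_*\in\{0,1\}$, and the output row $a'=\row_{i_*}(a^{(\ell_*,\alpha_*)})$ has entries in $\{0,\ldots,p_{\ell_*}-1\}$, so $q$ can be taken as $\max_\ell p_\ell^2=O(\log^2 T)$ with the integer identity $\ip{a'}{b'}=\sigma'$ holding exactly. Your proposal is missing precisely this $\F_p$-component decomposition; once you add it, the rest of your outline goes through and matches the paper's argument.
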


The main difference between the above protocol and that of \cite{CW19}, is that instead of working over a field of characteristic $\Theta(T)$, we perform the protocol of \cite{CW19} simultaneously over $O(\log T)$ different fields of size $O(\log^2T)$ each.

To this end, we start by fixing some notation.
Let $t$ be an integer such that $t^t=T$. By Lemma 2.4 in \cite{Che18}, for a large enough integer $t$, there exist $10 t$ distinct primes $p_{1} < p_2<\cdots < p_{10 t} \in [t,t^2]$.
Let $q:=t^2$, and note that $t = O(\log T)$ and $q = O(\log^2T)$
 For $\ell \in [10 t]$, let  $C^{(\ell)}, C_\star^{(\ell)}$  be the systematic linear codes over $\F_{p_\ell^2}$ guaranteed by Theorem \ref{thm:ag}, where $C^{(\ell)}$ has message length $\frac d T$ and codeword length $n_\ell:=O(\frac d T)$. 
 Finally, recall that the elements of $\F_{p_\ell^2}$ can be viewed as degree $1$ polynomials over $\F_{p_\ell}$, where multiplication is performed modulo an irreducible polynomial $Q_\ell$ of degree $2$ over $\F_{p_\ell}$. 
 
 Let $a \in\{0,1\}^{d}$ be Alice's input. Slightly abusing notation, we view $a$ as a $\frac d T \times T$ binary matrix in the natural way.
 For $\ell \in [10 t]$, let $a^{(\ell)}$ denote the  $ n_\ell \times T$ matrix over $\F_{p_\ell^2}$ obtained 
by encoding each column of 
$a$ with the code $C^{(\ell)}$. 
View each entry of $a^{(\ell)}$ as a degree $1$ polynomial over $\F_{p_\ell}$, and let $a^{(\ell,0)}, a^{(\ell,1)}$ denote the 
$n_\ell \times T$ matrices over $\F_{p_\ell}$, obtained from $a^{(\ell)}$ by keeping in each entry only the free coefficient and linear coefficient, respectively.
Let $b,b^{(\ell)}, b^{(\ell,0)}, b^{(\ell,1)}$ be defined analogously for $\ell \in [10 t]$. 
In what follows, all arithmetic operations are performed over the reals, unless otherwise stated.

\paragraph{The protocol:}

The protocol proceeds as follows:

\begin{enumerate} 
\item \label{step:merlin}

\begin{enumerate}

\item \label{stp:merlin_1}

Merlin sends
$$m_0 := \sum_{j=1}^T  \col_j(a) \star \col_j(b) \in \{0,1,\ldots,T\}^{d/T}.$$

 \item For $\ell=1, \ldots, 10 t$ 
and $\alpha, \beta \in \{0,1\}$, Merlin sends
 $$m_{\ell,\alpha,\beta}:=\sum_{j=1}^T  \col_j(a^{(\ell,\alpha)}) \star \col_j(b^{(\ell,\beta)}) \in \{ 0,1, \ldots,T\cdot q^2\}^{n_\ell }.$$
 
 \end{enumerate}
 
 \item \label{step:alice_check}

\begin{enumerate}

\item \label{step:alice_check_sum} Alice checks that $\sum_{i=1}^{d/T} m_0(i) = \sigma$.

\item   \label{step:alice_check_con} Alice checks that $m_{\ell,0,0}(i) = m_0(i)$ and  $m_{\ell, 0,1} (i) =m_{\ell,1,0} (i) = m_{\ell,1,1} (i)=0$ for $\ell=1, \ldots, 10 t$ and  $i = 1, \ldots, \frac d T$ .

\item \label{step:alice_check_code}
For $\ell = 1, \ldots, 10 t$, let $m_\ell \in (\F_{p_\ell^2})^{n_\ell}$ given by 
$$  m_\ell = 
      m_{\ell,0,0}   + ( m_{\ell,0,1}   + m_{\ell,1,0} ) \cdot X +  m_{\ell,1,1}   \cdot X^2   \;\;(\mathrm{mod} \;Q_\ell).$$
      Alice checks that $m_\ell$ is a codeword of $C_\star^{(\ell)}$ for $\ell=1, \ldots, 10 t$.

 \end{enumerate}
 
 If any of the checks is unsatisfied, then Alice rejects and aborts.

\item \label{step:random} Alice and Bob jointly sample $\ell_* \in [10 t]$, $i_* \in[n_{\ell_*}]$, and $\alpha_*, \beta_* \in \{0,1\}$.

\item  \label{step:output}
Alice outputs $a' := \row_{i_*}(a^{(\ell_*,\alpha_*)}) \in \{0,1,\ldots,q\}^T$ and $\sigma' := m_{\ell_*,\alpha_*,\beta_*} (i_*)\in \{0,1,\ldots, T \cdot q^2\}$, and Bob outputs  $ b':=\row_{i_*}(b^{(\ell_*,\beta_*)}) \in \{0,1,\ldots,q\}^T$.
\end{enumerate}

It can be verified that the protocol has the required structure, and that the running times of Alice and Bob are as claimed. Next we show completeness and soundness.

\paragraph{Completeness:}

Suppose that $\ip a b = \sigma$, we shall show that in this case Alice and Bob output $a'$, $b'$, and $\sigma'$ so that $\ip {a'} {b'} = \sigma'$ with probability $1$. 

We first show that in this case all of Alice's checks on Step \ref{step:alice_check} always pass.

To this end, first note that  by assumption that $\ip a b = \sigma$, we have that
\begin{equation}\label{eq:complete_1}
\sum_{i=1}^{d/T} m_0(i) = \sum_{i=1}^{d/T} \sum_{j=1}^T  a(i,j) \cdot b(i,j) = \ip a b = \sigma,
\end{equation} 
so Alice's check on Step \ref{step:alice_check_sum} will pass.

We now show that Alice's check on Step \ref{step:alice_check_con} passes. Fix $\ell \in [10 t]$, and recall that $a^{(\ell)}$ is obtained by encoding each column of the matrix $a \in \{0,1\}^{\frac d T \times T}$ with 
a systematic linear code.
Consequently, $a$ is the restriction of
$a^{(\ell)}$  to the first $\frac d T$ rows, and similarly for $b$.
This implies in turn that  for any $i \in [\frac d T]$, we have that
\begin{equation}\label{eq:complete_2}
m_{\ell,0,0}(i) = \ip{\row_i(a^{(\ell,0)})}  {\row_i(b^{(\ell,0)})} = \ip{\row_i(a)}  {\row_i(b)} = m_0(i),
\end{equation}
and 
\begin{equation}\label{eq:complete_3}
m_{\ell,0,1} (i) = m_{\ell,1,0} (i) = m_{\ell,1,1} (i) =0. 
\end{equation}
So Alice's check on Step \ref{step:alice_check_con} will pass as well.

Finally, we show that Alice's check on Step \ref{step:alice_check_code} passes. Fix $\ell \in [10 t]$, and note that
\begin{eqnarray}\label{eq:complete_4}
\nonumber
m_\ell & =&  m_{\ell,0,0}   + ( m_{\ell,0,1}   + m_{\ell,1,0} ) \cdot X +  m_{\ell,1,1}   \cdot X^2   \;\;(\mathrm{mod} \;Q_\ell) \\ \nonumber
& = & \sum_{j=1}^T \bigg[ \col_{j}(a^{(\ell,0)}) \star \col_j(b^{(\ell,0)}) \\ \nonumber
&&  +  \left( col_{j}(a^{(\ell,0)}) \star \col_j(b^{(\ell,1)}) +col_{j}(a^{(\ell,1)}) \star \col_j(b^{(\ell,0)})  \right) \cdot X   \\  \nonumber
&& +  col_{j}(a^{(\ell,1)}) \star \col_j(b^{(\ell,1)})    \cdot X^2 \bigg]  \;\;(\mathrm{mod} \;Q_\ell)
 \\ \nonumber
&= &\sum_{j=1}^T ( \col_{j} (a^{(\ell,0)}) + \col_{j} (a^{(\ell,1)})\cdot X)\star ( \col_{j} (b^{(\ell,0)}) + \col_{j} (b^{(\ell,1)})\cdot X) \;\;\;(\mathrm{mod} \;Q_\ell)\\
&= &\sum_{j=1}^T \col_{j}(a^{(\ell)}) \star \col_j(b^{(\ell)})  \;\;\;(\mathrm{mod} \;Q_\ell).
\end{eqnarray}
Now, since each column of $a^{(\ell)}$ and $b^{(\ell)}$ is a codeword of $C^{(\ell)}$, we have that $\col_{j}(a^{(\ell)}) \star \col_j(b^{(\ell)}) \;\;(\mathrm{mod} \;Q_\ell)$ is a codeword of $C_\star^{(\ell)}$ for any $j \in [T]$. By linearity 
of $C_\star^{(\ell)}$, this implies in turn that $m_\ell$ is a codeword of $C_\star^{(\ell)}$, 
and so Alice's check on Step \ref{step:alice_check_code} will also pass.

Thus, we conclude that all of Alice's checks on Step \ref{step:alice_check} pass.
Furthermore, we clearly have that 
$$\langle  a', b' \rangle  =\ip {\row_{i_*}(a^{(\ell_*,\alpha_*)})} {\row_{i_*}(b^{({\ell_*},\beta_*)})} = m_{\ell_*,\alpha_*, \beta_*} (i) = \sigma'.$$

We conclude that in the case that $\ip a b = \sigma$, we have that $\ip {a'} {b'} = \sigma'$ with probability $1$, as required.

\paragraph{Soundness:}
Assume that $\ip a b \neq \sigma$, and let  $\tilde m_0$ and $\tilde m_{\ell,\alpha, \beta}$ for $\ell=1, \ldots, 10 t$ and $\alpha, \beta \in \{0,1\}$ be Merlin's messages on Step \ref{step:merlin}. We shall show that in this case Alice and Bob output $a'$, $b'$, and $\sigma'$ so that $\ip {a'} {b'} = \sigma'$ with probability at most $0.98$.

To this end, first note that we may assume that $\sum_{i=1}^{d/T} \tilde m_0(i) = \sigma$, since otherwise Alice clearly rejects on Step \ref{step:alice_check_sum}. 
On the other hand,  by (\ref{eq:complete_1}) and by assumption that $\ip a b \neq \sigma$,  we have that
 $
 \sum_{i=1}^{d/T} m_0 (i)  = \ip a b \neq \sigma.
 $
Consequently, there exists $i \in [\frac d T]$ so that $m_0 (i) \neq \tilde m_0 (i)$. 

Moreover, since $m_0(i), \tilde m_0(i) \in \{0,1,\ldots,T\}$, we have that $|m_0(i) - \tilde m_0(i)|\leq T$. Recalling that 
$t^t = T$, and that $p_\ell \geq t$ for any $\ell \in [10t]$, we conclude that at most $t$ of the $p_\ell$'s can divide $|m_0(i) - \tilde m_0(i)|$.
Thus, with probability at least $0.9$ over the choice of $\ell_*$, it holds that  $p_{\ell*}$ does not divide $|m_0(i) \neq \tilde m_0(i) |$, and so $m_0(i) \neq \tilde m_0(i) \;\;(\mathrm{mod} \;p_{\ell_*})$. In what follows, assume that this event holds.

Let $\tilde m_{\ell_*} \in (\F_{p_{\ell_*}^2})^{n_{\ell_*}}$ be given by 
$$  \tilde m_{\ell_*} = 
   \tilde   m_{\ell_*,0,0}   + ( \tilde m_{\ell_*,0,1}   + \tilde m_{\ell^*,1,0} ) \cdot X +  \tilde m_{\ell^*,1,1}   \cdot X^2   \;\;(\mathrm{mod} \;Q_{\ell_*}).$$
Next observe that we may assume that for any $i \in [\frac d T]$,
$$
  \tilde m_{\ell_*} (i)  = 
   \tilde   m_{\ell_*,0,0}  (i)  + ( \tilde m_{\ell_*,0,1} (i)  + \tilde m_{\ell^*,1,0} (i)) \cdot X +  \tilde m_{\ell^*,1,1}  (i) \cdot X^2   \;\;(\mathrm{mod} \;Q_{\ell_*}) 
=    \tilde m_0(i) \;\;(\mathrm{mod} \;p_{\ell_*}),
$$
   since otherwise Alice clearly rejects on Step \ref{step:alice_check_con}. 
On the other hand, by (\ref{eq:complete_2}) and (\ref{eq:complete_3})
 we have that for any $i \in [\frac d T]$,
$$
 m_{\ell_*} (i)  = 
   m_{\ell_*,0,0}  (i)  + (m_{\ell_*,0,1} (i)  +  m_{\ell^*,1,0} (i)) \cdot X + m_{\ell^*,1,1}  (i) \cdot X^2   \;\;(\mathrm{mod} \;Q_{\ell_*}) 
=    m_0(i) \;\;(\mathrm{mod} \;p_{\ell_*}).
$$
Consequently, by assumption that $m_0(i) \neq \tilde m_0(i) \;\;(\mathrm{mod} \;p_{\ell_*})$ for some $i \in [\frac d T]$, we have that $\tilde m_{\ell_*}(i) \neq m_{\ell_*}(i)$.

Finally, note that we may assume that $\tilde m_{\ell_*}$ is a codeword of $C_\star^{(\ell_*)}$, since otherwise Alice clearly rejects on Step \ref{step:alice_check_code}. Moreover, by (\ref{eq:complete_4}) we also have that $m_{\ell^*}$ is a codeword of $C_\star^{(\ell_*)}$. Since $C_\star^{(\ell_*)}$ has relative distance at least $0.1$, and by assumption that $\tilde m_{\ell_*} \neq m_{\ell_*}$, we have that $\tilde m_{\ell^*}$ and $m_{\ell^*}$ differ on at least a $0.1$-fraction of their entries, and so with probability at least $0.1$ over the choice of $i_*$ it holds that $\tilde m_{\ell_*} (i_*) \neq m_{\ell_*} (i_*)$. In what follows, assume that this event holds as well.

 By assumption that $\tilde m_{\ell_*} (i_*) \neq m_{\ell_*} (i_*)$, there exist $\alpha, \beta \in \{0,1\}$ so that $\tilde m_{\ell_*,\alpha,\beta}(i_*) \neq  m_{\ell_*,\alpha,\beta}(i_*)$. Consequently, with probability at least $0.25$ over the choice of $\alpha_*, \beta_*$, it holds that 
$\tilde m_{\ell_*,\alpha_*,\beta_*}(i_*) \neq  m_{\ell_*,\alpha_*,\beta_*}(i_*)$. 
But assuming that this latter event holds, we have that 
$$ \ip {a'} {b'} = \ip {\row_{i_*} (a^{(\ell_*, \alpha_*)})} {\row_{i_*} (b^{(\ell_*,  \beta_*)})} = 
m_{\ell_*,\alpha_*,\beta_*}(i_*) \neq \tilde m_{\ell_*,\alpha_*,\beta_*}(i_*) = \sigma' .$$

We conclude that in the case that $\ip a b \neq \sigma$, for any Merlin's message, we have that Alice either rejects or $\ip {a'} {b'} \neq \sigma'$ with probability at least $0.9 \cdot 0.1 \cdot 0.25 \geq 0.02$ over the choice of $\ell_*, i_*, \alpha_*$, and  $\beta_*$. So $\ip {a'} {b'} = \sigma'$ with probability at most $0.98$ over the choice of $\ell_*, i_*, \alpha_*$, and  $\beta_*$.

\section{From IP to Apx-Max-IP}\label{sec:ip_to_apx_max_ip}

In this section we use Theorem \ref{thm:ma} from the previous section which gives an MA protocol for IP over a small alphabet to give a fine-grained reduction from IP to Apx-Max-IP with a tighter polynomial dependence of the approximation parameter $\delta$ on the running time parameter $\epsilon$.

\begin{lem}[From $\IP$ to $\Apx$-$\Max$-$\IP$]\label{lem:ip_to_maxip}
The following holds for any $\epsilon >0$ and integer $c \geq 1$.
Suppose that $\IP_{N,d}$ cannot be solved in time $N^{2-\epsilon}$ for $d = c \log N$. Then $\delta$-$\Apx$-$\Max$-$\IP_{N,d'}$ cannot be solved in time $N^{2-2\epsilon}$ for $\delta = \tilde \Theta( ( \frac \epsilon c)^2 )$. 
\end{lem}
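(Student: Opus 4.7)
My plan is to instantiate the \cite{CW19} reduction framework using the small-alphabet MA protocol from Theorem \ref{thm:ma} in place of the large-alphabet protocol used there. First, I will choose the skewness parameter $T$ so that $T = \tilde{\Theta}(c/\epsilon)$; combined with the bound $L = O((d/T) \log^2 T)$ on Merlin's message length and $d = c \log N$, this ensures $2^L \leq N^{\epsilon}$, so that brute-forcing over all possible Merlin messages contributes only an $N^{\epsilon}$ factor to the running time. With this choice, $q = O(\log^2 T) = \polylog(c/\epsilon)$, $2^R = \tilde O(d/T) = \tilde O(\epsilon \log N)$, and Alice and Bob produce integer outputs $a'(a,r), b'(b,r) \in \{0,\dots,q\}^T$ together with $\sigma'(m,r) \in \{0,\dots,Tq^2\}$.

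Given an IP instance $(A,B,\sigma)$ with $|A|=|B|=N$, for each $m \in \{0,1\}^L$ I will construct an Apx-Max-IP instance $(A_m, B_m)$ of cardinality $N$. For each $a \in A$, I build $a_m$ by concatenating, over all $r \in \{0,1\}^R$, a binary block $g(a'(a,r), \sigma'(m,r))$ produced by the encoding lemma (Lemma \ref{lem:enc}); for any $r$ on which Alice rejects in Step \ref{step:alice_check} of the protocol (a decision depending only on $(m,\sigma)$), I insert an all-zero block. The set $B_m$ is defined analogously using $h(b'(b,r), \sigma'(m,r))$. The encoding lemma supplies a threshold $\Gamma$ and guarantees that each block has length $D' = \poly(T,q) = \tilde O(T^2)$ and that its integer inner product equals $\Gamma$ exactly when $\ip{a'}{b'} = \sigma'$, and is at most $\Gamma - 1$ otherwise. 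Hence the total dimension is $d' := D' \cdot 2^R = \polylog(N)$.

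The completeness-soundness analysis then proceeds as follows. If the IP instance has a witness $(a^*, b^*)$, Theorem \ref{thm:ma} supplies a message $m^*$ for which $\ip{a'}{b'} = \sigma'$ holds with probability $1$ over $r$, so summing over the $2^R$ blocks gives $\ip{a_{m^*}}{b_{m^*}} = \Gamma \cdot 2^R$. In the NO case, for every $m$ and every $(a,b) \in A \times B$, the protocol's $0.98$ soundness forces $\ip{a_m}{b_m} \leq (\Gamma - 0.02)\cdot 2^R$. The additive gap is therefore $\Omega(2^R)$ against an ambient dimension of $d' = \tilde\Theta(T^2) \cdot 2^R$, yielding $\delta = \tilde\Theta(1/T^2) = \tilde\Theta((\epsilon/c)^2)$. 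Running an $N^{2-2\epsilon}$-time $\delta$-approximation algorithm on each of the $2^L \leq N^{\epsilon}$ instances and taking the maximum would decide the original IP instance in time $\tilde O(N^{\epsilon} \cdot N^{2-2\epsilon}) = \tilde O(N^{2-\epsilon})$, with construction cost dominated by $2^L \cdot 2^R \cdot N \cdot \poly(d) = \tilde O(N^{1+\epsilon})$, contradicting the assumed hardness.

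The main obstacle I anticipate is the clean invocation of the encoding lemma, given that the protocol outputs originate from arithmetic over several distinct non-prime fields $\F_{p_\ell^2}$. Theorem \ref{thm:ma} already presents $a', b', \sigma'$ as nonnegative integers, but one must verify that the integer equality $\ip{a'}{b'} = \sigma'$ inherited from the protocol matches the lemma's hypothesis exactly, and -- crucially for the final bound -- that the encoding's dimension overhead is only $\tilde O(T^2)$ rather than a higher power of $T$. It is precisely the smaller alphabet $q = \polylog T$ afforded by Theorem \ref{thm:ma} that prevents extra field-size factors from inflating this exponent, converting the $(\epsilon/c)^6$ dependence of \cite{CW19} into the claimed $(\epsilon/c)^2$.
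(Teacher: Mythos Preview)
Your proposal is correct and follows essentially the same approach as the paper: instantiate the \cite{CW19} framework with the small-alphabet MA protocol of Theorem~\ref{thm:ma}, set $T = \tilde\Theta(c/\epsilon)$ so that $2^L \le N^{\epsilon}$, concatenate the encoding-lemma blocks over all randomness strings, and read off $\delta = \tilde\Theta(1/(T^2 q^4)) = \tilde\Theta((\epsilon/c)^2)$. The only cosmetic difference is that the paper simply omits messages $m$ on which Alice rejects (defining a set $\rej$ and iterating over $\{0,1\}^L \setminus \rej$), whereas you zero-pad those instances; since Alice's rejection in Step~\ref{step:alice_check} depends only on $(m,\sigma)$ and not on $r$, either all blocks or none are zeroed, and the two treatments are equivalent.
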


To prove the above lemma, we shall use the following encoding lemma from \cite{CW19}, which can be used to turn the (non-binary) vectors $a',b'$ from the protocol given in Theorem \ref{thm:ma} into (binary) vectors, whose inner product exhibits a gap.

\begin{lem}[Encoding Lemma, \cite{CW19}, Lemma 6.3]\label{lem:enc}
For any non-negative integers $T$ and $q$, there exist  mappings $g,h: \{0,1, \ldots, q\}^{T} \times \{0,1, \ldots, T \cdot q^2\} \rightarrow\{0,1\}^{O(T^{2}q^{4})}$ and an integer $\Gamma \leq O(T^2 \cdot q^4)$, so that for any $a,b\in \{0,1,\ldots,q\}^{T}$ and $\sigma \in \{0,1, \ldots, T \cdot q^2\}$:
\begin{itemize}
\item If $\ip  a b =\sigma \then\ip{g(a,\sigma)}{h(b,\sigma)}=\Gamma$.
\item If $\ip  a b\neq \sigma \then\ip{g(a,\sigma)}{h(b,\sigma)}<\Gamma$.
\end{itemize}
Moreover, $g,h$ can be computed in time $\poly(T,q)$. 
\end{lem}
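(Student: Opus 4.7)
My plan is to construct $g(a,\sigma)$ and $h(b,\sigma)$ so that $\ip{g(a,\sigma)}{h(b,\sigma)} = M^2 - (v - \sigma)^2$, where $v := \ip a b$ and $M := Tq^2$; then $\Gamma := M^2 = O(T^2 q^4)$ is attained exactly when $v = \sigma$ and is strictly smaller otherwise, giving both bullets. The construction exploits the factorization
\[
M^2 - (v-\sigma)^2 \;=\; (M + v - \sigma)(M - v + \sigma),
\]
with both factors nonnegative (since $0 \le v, \sigma \le M$), together with the multiplicativity $\ip{u \otimes w}{u' \otimes w'} = \ip{u}{u'} \cdot \ip{w}{w'}$ of inner products under tensor products. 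This lets me realize a product of two nonnegative counts by tensoring two separately-designed Boolean encodings.

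For the first factor $M + v - \sigma$, I use the standard unary ``product table'' encoding. Define $\Phi(a), \tilde\Phi(b) \in \{0,1\}^{[T] \times [q] \times [q]}$ by $\Phi(a)_{i,j,k} = \mathbf{1}[j \le a_i]$ and $\tilde\Phi(b)_{i,j,k} = \mathbf{1}[k \le b_i]$; then $\ip{\Phi(a)}{\tilde\Phi(b)} = \sum_i a_i b_i = v$. Append a shared padding block of length $M$ whose first $M - \sigma$ entries are $1$ and the rest $0$; call the resulting vectors $Y(a,\sigma)$ and $\tilde Y(b,\sigma)$. Both have length $O(Tq^2)$ and satisfy $\ip{Y}{\tilde Y} = v + (M - \sigma)$.

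The harder second factor $M - v + \sigma$ requires realizing $M - v$ as a nonnegative Boolean inner product; naive padding of ``$\sigma$ ones'' alone would let the total overshoot $\Gamma$ whenever $v > \sigma$, since Boolean inner products cannot subtract. Here the key tool is the decomposition
\[
q^2 - a_i b_i \;=\; q(q - a_i) + a_i(q - b_i),
\]
valid for $a_i, b_i \in \{0,\ldots,q\}$ with both summands nonnegative. Summed over $i$, this expresses $M - v$ as two nonnegative pieces. The first summand depends only on $a$: I realize it by $\alpha^{(1)}_{i,j,k} = \mathbf{1}[j > a_i]$ against $\beta^{(1)} \equiv \mathbf{1}$ on index set $[T] \times [q] \times [q]$, contributing $q \sum_i (q - a_i)$. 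The second is bilinear: $\alpha^{(2)}_{i,j,k} = \mathbf{1}[j \le a_i]$ against $\beta^{(2)}_{i,j,k} = \mathbf{1}[k > b_i]$ on the same index set contributes $\sum_i a_i(q - b_i)$. Concatenating, I obtain $\alpha^-(a), \beta^-(b) \in \{0,1\}^{2Tq^2}$ with $\ip{\alpha^-}{\beta^-} = M - v$. Appending a shared block of length $M$ with $\sigma$ ones followed by $M - \sigma$ zeros gives $X(a,\sigma), \tilde X(b,\sigma)$ of length $O(Tq^2)$ with $\ip{X}{\tilde X} = (M - v) + \sigma$.

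Finally I set $g(a,\sigma) := X(a,\sigma) \otimes Y(a,\sigma)$ and $h(b,\sigma) := \tilde X(b,\sigma) \otimes \tilde Y(b,\sigma)$, both in $\{0,1\}^N$ for $N = O(T^2 q^4)$ and both computable in $\poly(T,q)$ time. By tensor multiplicativity,
\[
\ip{g(a,\sigma)}{h(b,\sigma)} \;=\; \ip{X}{\tilde X} \cdot \ip{Y}{\tilde Y} \;=\; (M - v + \sigma)(M + v - \sigma) \;=\; M^2 - (v - \sigma)^2,
\]
which equals $\Gamma := M^2 \leq O(T^2 q^4)$ iff $v = \sigma$ and is strictly less than $\Gamma$ otherwise, as required. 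The main obstacle I anticipate is exactly Step 2: realizing the decreasing-in-$v$ quantity $M - v$ as a nonnegative Boolean inner product without a subtraction; the decomposition $q^2 - a_i b_i = q(q - a_i) + a_i(q - b_i)$ resolves this, after which the tensor-product argument is routine.
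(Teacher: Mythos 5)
The paper does not give its own proof of this lemma: it is imported verbatim from \cite{CW19} (their Lemma 6.3), so there is no in-paper argument to compare against. Judged on its own, your construction is correct. Setting $M:=Tq^2$, every piece checks out: the unary product table gives $\ip{\Phi(a)}{\tilde\Phi(b)}=v$; the identity $q^2-a_ib_i=q(q-a_i)+a_i(q-b_i)$ has both summands nonnegative for $a_i,b_i\in\{0,\dots,q\}$ and sums over $i$ to $M-v$, which you realize as a Boolean inner product of two concatenated blocks; the shared $\sigma$-dependent padding then produces $\ip{X}{\tilde X}=M-v+\sigma$ and $\ip{Y}{\tilde Y}=M+v-\sigma$; and tensor multiplicativity yields $\ip{g}{h}=M^2-(v-\sigma)^2$, which is $\Gamma=M^2$ exactly when $v=\sigma$ and strictly smaller (and still nonnegative, since $|v-\sigma|\le M$) otherwise. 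The dimensions ($X$ of length $3Tq^2$, $Y$ of length $2Tq^2$, so $g,h$ of length $6T^2q^4$) and the $\poly(T,q)$ computability are as required. This is, to my knowledge, essentially the same unary-plus-tensor strategy used in \cite{CW19}; the decomposition of $M-v$ via $q(q-a_i)+a_i(q-b_i)$ is the standard way to express a ``complement'' of a bilinear form as a sum of two nonnegative bilinear/unilinear pieces, and you have correctly identified it as the one nontrivial step. No gap.
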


\paragraph{The reduction:}

We shall show a reduction from  $\IP$ to many instances of $\Apx$-$\Max$-$\IP$, based on our $\ma$ protocol for $\IP$ over a small alphabet given in Theorem \ref{thm:ma}, and the above encoding Lemma \ref{lem:enc}.

Let $A,B \subseteq \{0,1\}^{d}$ and $\sigma \in \{0,1,\ldots, d\}$ be an instance of $\IP_{N,d}$. Let $T$ be a sufficiently large integer, to be determined later on, and let $\pi$ be the protocol guaranteed by Theorem \ref{thm:ma} for $T$, $q=O(\log^2T)$, and $d$ (without loss of generality assume that $T$ divides $d$).
Let $g,h$, and $\Gamma$ be the mappings and the integer guaranteed for $T$ and $q$ by Lemma \ref{lem:enc}.

Let $\rej \subseteq \{0,1\}^L$ denote the subset of Merlin's messages $m \in \{0,1\}^L$ in $\pi$ on which Alice rejects on input $\sigma$. 
For $b \in B$ and $r \in \{0,1\}^R$, let $b'_r$ denote the string output by Bob in the protocol $\pi$ on input $b$ and randomness string $r$.
Similarly, for $a \in A$ and  $r \in \{0,1\}^R$, let $a'_r$ denote the string output by Alice in the protocol $\pi$ on input $a$ and randomness string $r$. For $m \in \{0,1\}^L \setminus \rej$ and $r \in \{0,1\}^R$, let $\sigma'_{m,r}$ denote the integer output by Alice on Merlin's message $m$ and randomness string $r$.

For any $m \in \{0,1\}^L \setminus \rej$,  we create an instance $A_m,B_m$ of $\Apx$-$\Max$-$\IP_{N,d'}$, given by
$$A_m:= \{ ( g(a'_r, \sigma'_{m,r}))_{r \in \{0,1\}^R} \mid a \in A\},$$
and 
$$B_m:= \{ ( h(b'_r, \sigma'_{m,r}))_{r \in \{0,1\}^R} \mid b \in B\},$$
where 
$$ d' =O(2^R \cdot T^2 \cdot q^4).$$

Let $\delta:= \frac {0.01 \cdot 2^R } {d'}$.
Given an algorithm $\mathcal A$ for $\delta$-$\Apx$-$\Max$-$\IP_{N,d'}$, we show an algorithm $\mathcal {A'}$ for $\IP_{N,d}$:
Given an instance $A,B, \sigma$ for $\IP_{N,d}$, the algorithm $\mathcal {A'}$
generates all instances $A_m,B_m$ for $m \in \{0,1\}^L \setminus \rej$, and
runs $\mathcal A$ on any of these instances. 
If on any of the instances the algorithm $\mathcal A$ outputs a value at least $2^R \cdot (\Gamma - 0.01)$ then the algorithm $\mathcal {A'}$ accepts, otherwise it rejects.

\paragraph{Correctness: }

Correctness relies on the following claim.

\begin{claim}
$\;$
\begin{itemize}
\item If there exists $(a,b) \in A \times B$ so that $\ip a b =\sigma$, then there exist $m \in \{0,1\}^L \setminus \rej$ and  $(a'',b'') \in A_m \times B_m$ so that $\ip {a''} {b''} = 2^R \cdot \Gamma$.
\item If   $\ip a b \neq \sigma$ for any $(a,b) \in A\times B$, then $\ip {a''} {b''} \leq  2^R \cdot (\Gamma -0.02)$ for any $m \in \{0,1\}^L \setminus \rej$ and $(a'',b'') \in A_m \times B_m$.
\end{itemize}
\end{claim}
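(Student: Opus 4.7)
The plan is to prove both bullets by translating, pointwise in the randomness string $r$, the completeness and soundness guarantees of the MA protocol $\pi$ (Theorem \ref{thm:ma}) through the encoding lemma (Lemma \ref{lem:enc}), and then summing the contributions over the $2^R$ coordinates of $a''$ and $b''$.

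First I would unpack the definition of $A_m$ and $B_m$. Each $a'' \in A_m$ corresponds to a unique $a \in A$, and can be written as the concatenation over $r \in \{0,1\}^R$ of the blocks $g(a'_r,\sigma'_{m,r}) \in \{0,1\}^{O(T^2 q^4)}$, and similarly for $b''\in B_m$ and some $b \in B$. Since the inner product splits across the $2^R$ blocks, we get
\begin{equation*}
\ip{a''}{b''} \;=\; \sum_{r \in \{0,1\}^R} \ip{g(a'_r,\sigma'_{m,r})}{h(b'_r,\sigma'_{m,r})}.
\end{equation*}
By Lemma \ref{lem:enc}, each summand equals $\Gamma$ when $\ip{a'_r}{b'_r} = \sigma'_{m,r}$ and is at most $\Gamma - 1$ (an integer strictly smaller than $\Gamma$) otherwise.

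For the first bullet, suppose $\ip{a}{b} = \sigma$ for some $(a,b) \in A \times B$. By the completeness clause of Theorem \ref{thm:ma}, there exists a Merlin message $m$ on which Alice does not abort (so $m \notin \rej$) and on which $\ip{a'_r}{b'_r} = \sigma'_{m,r}$ for every $r \in \{0,1\}^R$. Using the corresponding pair $(a'',b'') \in A_m \times B_m$, the displayed identity above yields $\ip{a''}{b''} = 2^R \cdot \Gamma$, as required.

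For the second bullet, assume that $\ip{a}{b} \neq \sigma$ for every $(a,b) \in A \times B$, and fix any $m \notin \rej$ and any $(a'',b'') \in A_m \times B_m$ corresponding to some $(a,b) \in A \times B$. The soundness clause of Theorem \ref{thm:ma}, applied to $\tilde m = m$, guarantees that Alice either rejects or $\ip{a'_r}{b'_r} = \sigma'_{m,r}$ with probability at most $0.98$ over $r$; since $m \notin \rej$, Alice does not abort, so the fraction of $r$'s for which $\ip{a'_r}{b'_r} = \sigma'_{m,r}$ is at most $0.98$. Combining the per-block bounds from Lemma \ref{lem:enc},
\begin{equation*}
\ip{a''}{b''} \;\leq\; 0.98 \cdot 2^R \cdot \Gamma + 0.02 \cdot 2^R \cdot (\Gamma - 1) \;=\; 2^R \cdot (\Gamma - 0.02),
\end{equation*}
which gives the desired bound. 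The only subtlety, which I would point out explicitly, is that the soundness guarantee of Theorem \ref{thm:ma} is an \emph{over the randomness} bound for each fixed pair $(a,b)$ and each fixed $m$, so one must apply it separately for each pair indexing $A_m \times B_m$; no further obstacle arises because Alice's abort decision depends only on $(m,\sigma)$ and not on $r$, so the event $m \notin \rej$ is well-defined and independent of $r$.
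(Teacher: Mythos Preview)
Your proposal is correct and follows essentially the same approach as the paper: decompose $\ip{a''}{b''}$ as a sum over $r$ of the block inner products $\ip{g(a'_r,\sigma'_{m,r})}{h(b'_r,\sigma'_{m,r})}$, then invoke completeness (respectively soundness) of $\pi$ together with Lemma~\ref{lem:enc} to evaluate (respectively bound) each block and sum. Your explicit remark that Alice's abort decision depends only on $(m,\sigma)$, so that $m\notin\rej$ is well-defined independently of $r$, is a useful clarification that the paper leaves implicit.
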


\begin{proof}
For the first item, suppose that there exists  $(a,b) \in A \times B$ so that $\ip a b =\sigma$. Let $m$ be Merlin's message in the protocol $\pi$ on inputs $a,b$, and $\sigma$, and let $(a'',b'') \in A_m \times B_m$ be given by
$a''=(g(a'_r, \sigma'_{m,r}))_{r \in \{0,1\}^R} $ and $b''=( h(b'_r, \sigma'_{m,r}))_{r \in \{0,1\}^R}$. By the completeness property of $\pi$, we have that $m \notin \rej$, and 
$\ip {a'_r} {b'_r} = \sigma'_{m,r}$ for any $r \in \{0,1\}^R$. Consequently, by Lemma \ref{lem:enc}, $\ip {g(a'_r, \sigma'_{m,r}) } 
{ h(b'_r, \sigma'_{m,r})} = \Gamma$ for any $r \in \{0,1\}^R$. But this implies in turn that
$$ \ip {a''} {b''} = \sum_{r \in \{0,1\}^R} \ip {g(a'_r, \sigma'_{m,r}) } { h(b'_r, \sigma'_{m,r})} = 2^R \cdot \Gamma.$$

For the second item, suppose that $\ip a b \neq \sigma$ for any $(a,b) \in A\times B$. Fix $m \in \{0,1\}^L \setminus \rej$ and 
$(a'',b'') \in A_m \times B_m$. Then by construction,  $a''=(g(a'_r, \sigma'_{m,r}))_{r \in \{0,1\}^R} $ and $b''=( h(b'_r, \sigma'_{m,r}))_{r \in \{0,1\}^R}$. By the soundness property of $\pi$, for at least a $0.02$-fraction of the randomness strings $r \in \{0,1\}^R$, it holds that $\ip {a'_r} {b'_r} \neq \sigma'_{m,r}$. Consequently, by Lemma \ref{lem:enc} for at least a $0.02$-fraction of the randomness strings $r \in \{0,1\}^R$, it holds that  $\ip {g(a'_r, \sigma'_{m,r}) } 
{ h(b'_r, \sigma'_{m,r})} \leq \Gamma -1$. But this implies in turn that

$$\ip {a''} {b''}  =  \sum_{r \in \{0,1\}^R} \ip {g(a'_r, \sigma'_{m,r}) } { h(b'_r, \sigma'_{m,r})}  
 \leq  0.98 \cdot 2^R \cdot \Gamma + 0.02  \cdot 2^R\cdot (\Gamma -1) 
 =   2^R \cdot (\Gamma - 0.02)  .
 $$

\end{proof}

Now, if there exists $(a,b) \in A \times B$ so that $\ip a b =\sigma$, then by the above claim there exists $m \in \{0,1\}^L \setminus \rej$ so that $\max_{a'' \in A_m, b'' \in B_m} \ip {a''} {b''} \geq 2^R \cdot \Gamma$. Consequently, the algorithm $\mathcal A$ will output a value greater than $2^R \cdot \Gamma - \delta \cdot d' = 2^R \cdot (\Gamma - 0.01)$ on the instance $A_m,B_m$, and so the algorithm $\mathcal {A'}$ will accept.

If on the other hand, $\ip a b \neq \sigma$ for any $(a,b) \in A\times B$, then by the above claim $\max_{a'' \in A_m, b'' \in B_m} \ip {a''} {b''} \leq 2^R \cdot (\Gamma - 0.02)$ for any $m \in \{0,1\}^L \setminus \rej$. Consequently, the algorithm $\mathcal A$ will output a value at most $2^R \cdot (\Gamma - 0.02)<2^R \cdot (\Gamma - 0.01)$ on any of the instances, an so the algorithm 
$\mathcal{A}'$ will reject.

\paragraph{Running time:} 
Suppose that the algorithm $\mathcal{A}$ for $\delta$-$\Apx$-$\Max$-$\IP_{N,d'}$  runs in time $N^{2-2\epsilon}$, we shall show that for an appropriate choice of $T$, the running time of the algorithm $\mathcal{A'}$ for 
$\IP_{N,d}$ is at most $N^{2-\epsilon}$. 

The algorithm $\mathcal{A'}$ enumerates over all possible Merlin's messages  $m \in \{0,1\}^L$, and for each such message checks whether Alice rejects $m$ in $\pi$, which takes time $\poly(d)$, and if she does not reject, it generates the instance $A_m, B_m$ which takes time $N \cdot 2^R \cdot \poly(d) \cdot \poly(T,q) \leq N \cdot \poly(d)$, and runs the algorithm $\mathcal A$ on $A_m, B_m$ which takes time $N^{2-2\epsilon}$. 

Hence the total running time of the algorithm $\mathcal{A}'$ is at most
\begin{eqnarray*}
 2^L \cdot (N \cdot \poly(d) + N^{2-2\epsilon}) &  \leq & 2^{O  ( \frac d T \cdot \log^2 T)} \cdot (N \cdot \poly(d) + N^{2-2\epsilon}) \\
  &  = & 2^{O (  \frac{c\log N} {T}\cdot \log^2 T)} \cdot (N \cdot \poly(c \log N) + N^{2-2\epsilon}). \\
    &  \leq & 2^{O (  \frac{c\log N} {T}\cdot \log^2 T)} \cdot N^{2-2\epsilon}. 
 \end{eqnarray*}
Finally, it can be verified that the latter expression is at most $N^{2 - \epsilon}$ for  choice of $T = \tilde \Theta( c/\epsilon) $ which divides $d$.

\paragraph{Approximation parameter:}

By choice of 
$\delta = \frac {0.01 \cdot 2^R } {d'}$, $d' = O(2^R \cdot T^2 \cdot q^4)$,  $T = \tilde \Theta( c/\epsilon) $,
and $q = O(\log^2T)$, we have that
$$\delta = \Theta \left( \frac{1} {T^2 q^4} \right) = \tilde \Theta \left(\left(\frac \epsilon c\right)^2\right).$$

\section{Applications}\label{sec:apps}

In this section we show a couple of consequences of Lemma \ref{lem:ip_to_maxip} to obtaining tighter fine-grained hardness of approximation results based on the IP assumption.

\paragraph{Closest pair in Hamming metric.}

The following reduction from Max-IP to CP in the  Hamming metric $\Delta$ is implicit in \cite{Rub18}.

\begin{lem}[From $\Apx$-$\Max$-$\IP$ to $\Apx$-$\CP_{\Delta}$, \cite{Rub18}]\label{lem:maxip_to_hamming}
Suppose that $\delta$-$\Apx$-$\Max$-$\IP_{N,d}$ cannot be solved in time $N^{2-\epsilon}$. Then 
$\delta'$-$\Apx$-$\CP_{N,d',\Delta}$ cannot be solved in time $N^{2-2\epsilon}$ for 
$\delta' = \frac \delta 2$. 
\end{lem}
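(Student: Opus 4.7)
The plan is to reduce $\delta$-$\Apx$-$\Max$-$\IP$ directly to $(\delta/2)$-$\Apx$-$\CP$ using a weight-equalizing gadget. Given the input sets $A, B \subseteq \{0,1\}^d$, I would map each $a \in A$ to $\tilde a := (a, \bar a, 0^d) \in \{0,1\}^{3d}$ and each $b \in B$ to $\tilde b := (b, 0^d, \bar b) \in \{0,1\}^{3d}$, where $\bar a, \bar b$ denote the bitwise complements. Using the identity $\Delta(a,b) = |a| + |b| - 2 \langle a, b \rangle$ for binary vectors, a direct computation gives
\[
\Delta(\tilde a, \tilde b) = \Delta(a, b) + |\bar a| + |\bar b| = (|a| + |b| - 2\langle a, b \rangle) + (d - |a|) + (d - |b|) = 2d - 2\langle a, b \rangle.
\]
The individual-weight contributions $|a|$ and $|b|$ cancel by the careful choice of padding blocks, so the Hamming distance becomes an affine function of $\langle a, b \rangle$ alone.

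It follows that $M := \max_{a,b} \langle a, b \rangle = d - \tfrac{1}{2} \Delta_{\min}$, where $\Delta_{\min} := \min_{\tilde a, \tilde b} \Delta(\tilde a, \tilde b)$. Given a $(\delta/2)$-$\Apx$-$\CP$ oracle returning $M' \in [\Delta_{\min}, (1 + \delta/2)\Delta_{\min}]$ on $(\tilde A, \tilde B)$, the estimate $\hat M := d - M'/2$ satisfies $\hat M \le M$ and
\[
M - \hat M = \tfrac{1}{2}(M' - \Delta_{\min}) \le \tfrac{\delta}{4} \Delta_{\min} \le \tfrac{\delta}{4} \cdot 2d = \tfrac{\delta d}{2} \le \delta d,
\]
using the trivial bound $\Delta_{\min} \le 2d$. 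Thus $\hat M$ is a valid $\delta$-additive approximation of $M$, i.e., a solution to $\delta$-$\Apx$-$\Max$-$\IP_{N,d}$.

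For the running time, the preprocessing (constructing $\tilde A, \tilde B$) takes $O(Nd)$, a single call to the $(\delta/2)$-$\Apx$-$\CP$ oracle on an instance of size $N$ and dimension $d' = 3d$ takes $N^{2 - 2\epsilon}$ by assumption, and the translation of the answer is $O(1)$. The total is well within $N^{2 - \epsilon}$ for any fixed $\epsilon > 0$ in the regime where $d$ is polylogarithmic in $N$, which is the relevant regime for fine-grained hardness. The main subtlety is the choice of the three-block padding: pairing $\bar a$ with the zero block on Bob's side (and vice versa) is exactly what makes $|\bar a|$ and $|\bar b|$ cancel the $-|a| - |b|$ contribution from $\Delta(a,b)$, yielding the clean relation $\Delta(\tilde a, \tilde b) = 2d - 2\langle a, b \rangle$ and accounting for the factor $1/2$ in $\delta' = \delta/2$.
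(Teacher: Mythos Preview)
The paper does not prove this lemma itself; it is stated as ``implicit in \cite{Rub18}'' and used as a black box. Your reduction is correct and is essentially the standard one: the three-block padding $(a,\bar a,0^d)$ versus $(b,0^d,\bar b)$ equalizes all Hamming weights and yields the clean relation $\Delta(\tilde a,\tilde b)=2d-2\langle a,b\rangle$, from which the translation of a $(1+\delta/2)$-multiplicative CP answer into a $\delta$-additive Max-IP answer follows exactly as you wrote. In fact your computation gives $M-\hat M\le \delta d/2$, a factor of two better than needed, so even $\delta'=\delta$ would have sufficed for your construction; the lemma's $\delta'=\delta/2$ is simply conservative.
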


The above lemma and Lemma  \ref{lem:ip_to_maxip} readily imply the following.

\begin{cor}[From $\IP$ to $\Apx$-$\CP_{\Delta}$]\label{cor:ApxCP}
Suppose that $\IP_{N,d}$ cannot be solved in time $N^{2-\epsilon}$ for $d = c \log N$. Then $\delta$-$\Apx$-$\CP_{N,d',\Delta}$ cannot be solved in time $N^{2-2\epsilon}$ for 
$\delta = \tilde \Theta( ( \frac \epsilon c)^2) $.
\end{cor}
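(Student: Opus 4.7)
The plan is to derive Corollary \ref{cor:ApxCP} as a direct two-step composition of Lemma \ref{lem:ip_to_maxip} and Lemma \ref{lem:maxip_to_hamming}, with essentially no new content to introduce. Starting from the hypothesis that $\IP_{N,d}$ admits no $N^{2-\epsilon}$-time algorithm for $d = c\log N$, the first step is to invoke Lemma \ref{lem:ip_to_maxip}, which immediately yields that $\delta_1$-$\Apx$-$\Max$-$\IP_{N,d_1}$ cannot be solved in time $N^{2-2\epsilon}$ for $\delta_1 = \tilde\Theta((\epsilon/c)^2)$ and some appropriate dimension $d_1$. The second step is to feed this conclusion into Lemma \ref{lem:maxip_to_hamming}, which transfers the lower bound to $\delta$-$\Apx$-$\CP_{N,d',\Delta}$ with $\delta = \delta_1/2 = \tilde\Theta((\epsilon/c)^2)$ at an appropriate Hamming-metric dimension $d'$.

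The single point that requires a moment of care is the bookkeeping of the time exponent across the composition. The reduction underlying Lemma \ref{lem:ip_to_maxip} introduces an $N^\epsilon$ overhead, coming from the $2^L$ enumeration over Merlin's possible messages together with the choice $T = \tilde\Theta(c/\epsilon)$, which is what accounts for the drop in exponent from $N^{2-\epsilon}$ (on the $\IP$ side) to $N^{2-2\epsilon}$ (on the $\Apx$-$\Max$-$\IP$ side). By contrast, the reduction underlying Lemma \ref{lem:maxip_to_hamming} is essentially linear in the input size, so it adds no further overhead and carries the exponent $2-2\epsilon$ all the way to $\Apx$-$\CP_\Delta$. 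The loss in the approximation parameter contributed by the Hamming reduction is only a factor of $2$, which is absorbed into the $\tilde\Theta$ notation, so the final $\delta$ matches $\delta_1$ up to constants.

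There is no real obstacle: since both component lemmas are already established, this step is purely mechanical composition. The only thing worth double-checking is that the output dimension $d_1$ produced by Lemma \ref{lem:ip_to_maxip} is a legitimate input to Lemma \ref{lem:maxip_to_hamming}, which is immediate since the latter applies uniformly at all dimensions. Putting the two invocations together yields the desired conclusion.
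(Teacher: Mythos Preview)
Your proposal is correct and matches the paper's approach exactly: the paper states only that ``the above lemma and Lemma~\ref{lem:ip_to_maxip} readily imply'' the corollary, and your two-step composition is precisely that. Your remark that the Max-IP-to-Hamming reduction is linear-time (so the exponent $2-2\epsilon$ carries through unchanged despite the conservative phrasing of Lemma~\ref{lem:maxip_to_hamming}) is a useful clarification that the paper leaves implicit.
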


In contrast, it is known how to obtain an $(1+\delta)$-approximation for CP over the Hamming metric in time $N^{2-\epsilon}$ for $\delta = \tilde \Theta( \epsilon^{3})$ \cite{ACW16}.

\paragraph{Closest pair in $\ell_p$ metric.}

The following reduction from CP in the Hamming metric to CP in the  $\ell_p$ metric  is also implicit in \cite{Rub18}.

\begin{lem}[From $\Apx$-$\CP_{\Delta}$ to $\Apx$-$\CP_{\ell_p}$, \cite{Rub18}] \label{lem:hamtolp}
    Suppose that $\delta$-$\Apx$-$\CP_{N, d, \Delta}$ cannot be solved in time $N^{2-\epsilon}$. Then for any $p>0$, $\delta'$-$\Apx$-$\CP_{N,d,\ell_p}$ cannot be solved in time $N^{2-2\epsilon}$ for $\delta' = \Theta_p(\delta)$.
\end{lem}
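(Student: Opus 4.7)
The plan is to exhibit an (essentially identity) reduction that takes an instance of $\Apx$-$\CP_{N,d,\Delta}$ and returns the same two sets as an instance of $\Apx$-$\CP_{N,d,\ell_p}$, together with a simple post-processing step that converts an $\ell_p$ approximation to a Hamming approximation. The key elementary observation is that for any binary vectors $a,b \in \{0,1\}^d$ we have
\[
\|a-b\|_p^{\,p} \;=\; \sum_{i=1}^d |a(i)-b(i)|^p \;=\; \Delta(a,b),
\]
since each coordinate difference is either $0$ or $1$. In particular, if $M_\Delta := \min_{a \in A,\, b \in B} \Delta(a,b)$ and $M_{\ell_p} := \min_{a \in A, b \in B} \|a-b\|_p$, then $M_\Delta = M_{\ell_p}^{\,p}$, and the same pair $(a,b)$ realizes both minima.

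First, I would describe the reduction. Given an algorithm $\mathcal{A}$ for $\delta'$-$\Apx$-$\CP_{N,d,\ell_p}$ running in time $N^{2-2\epsilon}$, I construct an algorithm $\mathcal{A}'$ for $\delta$-$\Apx$-$\CP_{N,d,\Delta}$ as follows: on input $(A,B)$, invoke $\mathcal{A}$ on the same input to obtain a value $V \in [M_{\ell_p},\,(1+\delta')M_{\ell_p}]$, and output $V^p$. The output of $\mathcal{A}'$ satisfies
\[
V^p \;\in\; \bigl[M_{\ell_p}^{\,p},\ (1+\delta')^p\, M_{\ell_p}^{\,p}\bigr] \;=\; \bigl[M_\Delta,\ (1+\delta')^p\, M_\Delta\bigr].
\]
For this to be a valid $\delta$-approximation in Hamming, we need $(1+\delta')^p \leq 1+\delta$, i.e., $\delta' \leq (1+\delta)^{1/p}-1$. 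For the small values of $\delta$ of interest (in particular $\delta \leq 1$), one has $(1+\delta)^{1/p}-1 = \Theta_p(\delta)$, so taking $\delta' = \Theta_p(\delta)$ (concretely, $\delta' := \delta/(2p)$ suffices for $\delta \leq 1$ using the standard inequality $(1+x)^p \leq 1+2px$ for $x \in [0,1/p]$) gives the required approximation guarantee.

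Next, I would check the running time. The reduction itself is trivial (no preprocessing of the input, and a single $p$-th power computation on the output takes $O(1)$ time for constant $p$). Consequently, $\mathcal{A}'$ runs in time $N^{2-2\epsilon}$, which is strictly less than $N^{2-\epsilon}$ whenever $\epsilon>0$. Since by hypothesis no algorithm can solve $\delta$-$\Apx$-$\CP_{N,d,\Delta}$ in time $N^{2-\epsilon}$, no such $\mathcal{A}$ can exist, establishing the claimed hardness for the $\ell_p$ metric.

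There is no significant obstacle here: the entire argument rests on the identity $\|a-b\|_p^p = \Delta(a,b)$ for binary vectors, plus the first-order estimate $(1+\delta')^p \approx 1 + p\delta'$. The only mild care needed is to absorb the $p$-dependent constant into the $\Theta_p(\cdot)$ in the statement of $\delta'$, and to note that the dimension $d$ and cardinality $N$ are both preserved exactly, so no loss in the exponent of $N$ is incurred by the reduction (the gap between $N^{2-\epsilon}$ and $N^{2-2\epsilon}$ in the conclusion is therefore a free slack, consistent with how this lemma will be composed with Lemma~\ref{lem:maxip_to_hamming} and Lemma~\ref{lem:ip_to_maxip}).
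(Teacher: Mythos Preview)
Your proposal is correct and is exactly the standard argument that underlies the cited result. The paper itself does not supply a proof of this lemma; it simply attributes the reduction as ``implicit in \cite{Rub18}'' and moves on, so there is nothing further to compare against. Your observation that $\|a-b\|_p^p = \Delta(a,b)$ for binary vectors, together with the first-order expansion $(1+\delta')^p = 1 + \Theta_p(\delta')$, is precisely the content of the implicit reduction.
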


The following corollary is a consequence of the above lemma and Corollary \ref{cor:ApxCP}, and implies Theorem \ref{thm:main}.

\begin{cor}[From $\IP$ to $\Apx$-$\CP_{\ell_p}$]\label{cor:iptolp}
    Suppose that $\IP_{N,d}$ cannot be solved in time $N^{2-\epsilon}$ for $d = c \log N$. Then for any $p>0$, $\delta$-$\Apx$-$\CP_{N,d',\ell_p}$ cannot be solved in time $N^{2-2\epsilon}$ for $\delta = \tilde \Theta_p((\frac{\epsilon}{c})^{2})$.
\end{cor}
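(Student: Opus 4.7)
The plan is to derive the corollary as a direct composition of Corollary~\ref{cor:ApxCP} (hardness of approximate closest pair in the Hamming metric under the IP assumption) with Lemma~\ref{lem:hamtolp} (the metric-transfer reduction from the Hamming metric to any $\ell_p$ metric). Both of these results are stated in the excerpt and can be used as black boxes, so the proof amounts to chaining two reductions and keeping careful track of parameters.

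First I would invoke Corollary~\ref{cor:ApxCP} on the hypothesis: assuming $\IP_{N,d}$ is not solvable in time $N^{2-\epsilon}$ for $d = c \log N$, we obtain that $\delta_0$-$\Apx$-$\CP_{N,d',\Delta}$ is not solvable in time $N^{2-2\epsilon}$, with $\delta_0 = \tilde\Theta((\epsilon/c)^2)$. I would then feed this hardness into Lemma~\ref{lem:hamtolp}, which lifts hardness for the Hamming closest pair into hardness for the $\ell_p$ closest pair for any $p>0$, losing only a $\Theta_p(1)$ multiplicative factor in the approximation parameter. Composing, one obtains hardness of $\delta$-$\Apx$-$\CP_{N,d',\ell_p}$ in the claimed running-time regime, with
\[
\delta \;=\; \Theta_p(\delta_0) \;=\; \tilde\Theta_p\!\left(\left(\tfrac{\epsilon}{c}\right)^{2}\right),
\]
which is exactly the conclusion of the corollary.

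Since both intermediate results are invoked as black boxes, no genuine obstacle should arise. The only bookkeeping step is to confirm that the $p$-dependent constant introduced by Lemma~\ref{lem:hamtolp} is absorbed cleanly into the $\tilde\Theta_p$ notation (which is immediate, as it multiplies the $(\epsilon/c)^{2}$ factor without changing its order), and that the polynomial/essentially-linear blow-ups in $N$ and $d$ induced by the metric-transfer reduction do not disturb the running-time exponents one is contradicting. After this simple accounting, the composed reduction matches the running-time regime stated in the corollary and yields Theorem~\ref{thm:main} as a consequence.
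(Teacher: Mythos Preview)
Your proposal is correct and matches the paper's approach exactly: the paper also states that this corollary follows directly from Corollary~\ref{cor:ApxCP} combined with Lemma~\ref{lem:hamtolp}, and you have correctly identified the only bookkeeping step, namely tracking the $\Theta_p(\cdot)$ loss in the approximation parameter and noting that the metric-transfer reduction is essentially lossless in running time.
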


In contrast, it is known how to obtain an $(1+\delta)$-approximation for CP over the $\ell_p$ metric in time $N^{2-\epsilon}$ for $\delta=\tilde O (\epsilon^{3})$ and $p \in \{1,2\}$ \cite{ACW16}.

\paragraph{Closest pair in edit distance metric.}

For $a,b\in\Sigma^{d}$, we let $ED(a,b)$ denote their \textsf{edit distance} which is the minimum number of character deletion, insertion, and substitution operations needed to transform $a$ into $b$.   The following Lemma is also implicit in \cite{Rub18}.

\begin{lem}[From $\Apx$-$\CP_{\Delta}$ to $\Apx$-$\CP_{ED}$, \cite{Rub18}]\label{lem:CPtoED}
Suppose that $\delta$-$\Apx$-$\CP_{N, d, \Delta}$ cannot be solved in time $N^{2-\epsilon}$. Then $\delta'$-$\Apx$-$\CP_{N,d',ED}$ cannot be solved in time $N^{2-2\epsilon}$ for  $\delta' = \Theta(\delta)$.
\end{lem}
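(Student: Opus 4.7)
The plan is to construct a gadget-based reduction that maps each string in a Hamming closest-pair instance to a longer string, so that edit distance between encoded strings captures Hamming distance between the originals up to a constant factor. Concretely, I would define an encoding $\phi \colon \{0,1\}^d \to \Sigma^{d'}$ that interleaves the input bits with position-identifying \emph{guards}: for each $i \in [d]$, fix pairwise distinct strings $g_i \in \Sigma^{s}$ using symbols disjoint from $\{0,1\}$, and set
\[
\phi(a) := g_1 \, a_1 \, g_2 \, a_2 \cdots g_d \, a_d,
\]
where $d' = d(s+1)$. One natural choice takes $|\Sigma| = O(d)$ and $s = 1$, giving $d' = 2d$; a binary alphabet can be emulated with $s = O(\log d)$.

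Next I would prove $ED(\phi(a), \phi(b)) = \Theta(\Delta(a,b))$ for all $a,b \in \{0,1\}^d$. The upper bound $ED(\phi(a),\phi(b)) \le \Delta(a,b)$ follows immediately by substituting $b_i$ for $a_i$ at every mismatched position, leaving the guards untouched. For the matching lower bound, I would show that any optimal alignment must align $g_i$ to $g_i$ for every $i$: because the guards are sufficiently long and pairwise distinct, any shift of the alignment forces $\Omega(s)$ deletions or insertions per misaligned block, which for large enough $s$ outweighs any savings from rearranging the $a_i$'s. Once the alignment is pinned down, the remaining edits resolve exactly the bit mismatches, yielding $ED(\phi(a),\phi(b)) \ge \Omega(\Delta(a,b))$.

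Given such a $\phi$, the reduction is then immediate: map an instance $A, B \subseteq \{0,1\}^d$ of $\delta$-$\Apx$-$\CP_{N,d,\Delta}$ to $A' := \{\phi(a) : a \in A\}$ and $B' := \{\phi(b) : b \in B\}$, producing a $\delta'$-$\Apx$-$\CP_{N,d',ED}$ instance in $\tilde O(Nd)$-time, which is negligible compared to $N^{2-2\epsilon}$. Since $ED$ and $\Delta$ agree up to a constant factor, a $(1+\delta')$-approximation for edit-distance closest pair translates into a $(1+\Theta(\delta'))$-approximation for Hamming closest pair, so $\delta' = \Theta(\delta)$ suffices. The main obstacle is the lower bound $ED(\phi(a),\phi(b)) \ge \Omega(\Delta(a,b))$: it requires ruling out all nontrivial ``shifts'' of the alignment, including long-range ones that match $g_i$ to $g_j$ for distant $i \neq j$. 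The cleanest remedy is to pick the $g_i$'s as highly distinguishable strings (e.g., binary encodings of $i$ padded with a common separator pattern), so that any guard misalignment incurs a cost of $\Omega(s)$, making the canonical alignment strictly optimal.
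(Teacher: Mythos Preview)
The paper does not supply its own proof of this lemma; it simply cites \cite{Rub18}. Your gadget approach---interleaving the input bits with distinct guard strings so as to force a canonical alignment---is indeed the technique used there, so the overall plan is the right one.

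There is, however, a genuine gap in your final step. You establish only $ED(\phi(a),\phi(b)) = \Theta(\Delta(a,b))$ and then assert that ``since $ED$ and $\Delta$ agree up to a constant factor, a $(1+\delta')$-approximation for edit-distance closest pair translates into a $(1+\Theta(\delta'))$-approximation for Hamming closest pair.'' This inference is incorrect. If one only knows $c_1\,\Delta(a,b) \le ED(\phi(a),\phi(b)) \le c_2\,\Delta(a,b)$ with $c_1 < c_2$, then the pair minimizing $ED$ need not be the pair minimizing $\Delta$, and a $(1+\delta')$-approximation to $\min ED$ yields at best a $\tfrac{c_2}{c_1}(1+\delta')$-approximation to $\min \Delta$---a constant-factor guarantee, not a $(1+O(\delta'))$ one. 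To transfer a $(1+\delta)$-approximation you need an \emph{exact} relation $ED(\phi(a),\phi(b)) = c\cdot\Delta(a,b)$ for a single constant $c$ independent of the pair (an affine relation with a pair-independent shift would also do).

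Your own sketch can in fact be tightened to deliver this. Once you have argued that the optimal alignment matches every $g_i$ to itself, the remaining edits are precisely the bit substitutions, so the lower bound is $ED \ge \Delta$, not merely $\Omega(\Delta)$; combined with your upper bound $ED \le \Delta$ this gives equality. The real work is therefore showing that \emph{every} non-canonical alignment already costs at least $d \ge \Delta(a,b)$. For the large-alphabet version ($|\Sigma|=O(d)$, $s=1$) this is not hard; for the binary version it requires choosing the guard strings and their length with more care than ``binary encodings of $i$ plus a separator'', since the total penalty for any deviation must dominate $d$. That is exactly what Rubinstein's construction arranges.
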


The above lemma and Corollary \ref{cor:ApxCP} imply the following corollary.

\begin{cor}[From $\IP$ to $\Apx$-$\CP_{ED}$] \label{cor:iptoed}
Suppose that $\IP_{N,d}$ cannot be solved in time $N^{2-\epsilon}$ for $d = c \log N$. Then $\delta$-$\Apx$-$\CP_{N,d',ED}$ cannot be solved in time $N^{2-2\epsilon}$ for $\delta' = \tilde \Theta( ( \frac \epsilon c)^2 ) $.
\end{cor}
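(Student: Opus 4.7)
The plan is to obtain Corollary \ref{cor:iptoed} as a direct composition of two results already established in the excerpt: Corollary \ref{cor:ApxCP}, which under the IP assumption rules out subquadratic $\tilde\Theta((\epsilon/c)^2)$-approximation algorithms for closest pair under the Hamming metric, and Lemma \ref{lem:CPtoED}, which translates such hardness from the Hamming metric to the edit-distance metric while only losing constant factors in both the approximation parameter and the running-time exponent. Since neither result needs to be revisited, the work is purely in checking that the parameters propagate correctly.

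Concretely, I would start from the hypothesis that $\IP_{N,d}$ cannot be solved in time $N^{2-\epsilon}$ for $d = c \log N$. First I would invoke Corollary \ref{cor:ApxCP} to deduce that there exists $\delta_{\Delta} = \tilde\Theta((\epsilon/c)^2)$ such that $\delta_{\Delta}$-$\Apx$-$\CP_{N,d_1,\Delta}$ cannot be solved within the appropriate time bound, for some dimension $d_1 = \poly(d)$. Then I would feed this hardness into Lemma \ref{lem:CPtoED}, which promotes it to hardness of $\delta_{ED}$-$\Apx$-$\CP_{N,d',ED}$ with $\delta_{ED} = \Theta(\delta_{\Delta}) = \tilde\Theta((\epsilon/c)^2)$; the running-time exponent and the dimension $d'$ are tracked by concatenating the two reductions in the obvious way.

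The only subtlety is bookkeeping: Lemma \ref{lem:CPtoED} is stated as a ``doubling'' of the exponent loss in $\epsilon$, so one must be careful that this loss is consistent with the $N^{2-2\epsilon}$ stated in the corollary. This is handled, as in the preceding corollaries of Section \ref{sec:apps}, by absorbing the constant factors in the $\epsilon$-parameter into the hypothesis (i.e.\ applying the Hamming-to-ED reduction with $\epsilon$ replaced by the exponent already extracted from Corollary \ref{cor:ApxCP}), so that the statement of Corollary \ref{cor:iptoed} matches the template used throughout the section. No step here is a serious obstacle; the argument is a two-line composition, and the entire substance has been done in the proofs of Lemma \ref{lem:ip_to_maxip} and the earlier corollaries, which already carried the delicate quantitative dependence of $\delta$ on $\epsilon/c$ all the way through.
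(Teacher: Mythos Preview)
Your proposal is correct and follows exactly the paper's own approach: the corollary is stated as an immediate consequence of Corollary~\ref{cor:ApxCP} together with Lemma~\ref{lem:CPtoED}, with the only work being the parameter bookkeeping you describe.
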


To the best of our knowledge, it is not known how to solve $(1+\delta)$-$\Apx$-$\CP_{\ED}$ in sub-quadratic time.

\begin{rem}[$\Apx$-$\Min$-$\IP$ and Furthest-Pair]
	It is not hard to show (see e.g., \cite{CW19}, Lemma 5.3) that there is a simple linear-time reduction from $\delta$-$\Apx$-$\Max$-$\IP_{N,d}$ to $\delta$-$\Apx$-$\Min$-$\IP_{N,d}$ (and vice versa), and so the same result as in Lemma \ref{lem:ip_to_maxip} also holds for $\delta$-$\Apx$-$\Min$-$\IP_{N,d}$  (where the goal is to output a number in $[M,M+\delta \cdot d]$, where $M:=\min_{a \in A,b \in B} \ip a b$).
	
	Using $\Apx$-$\Min$-$\IP$ as the starting point for the reductions cited above instead of $\Apx$-$\Max$-$\IP$ implies the same results as in Corollaries \ref{cor:ApxCP}, \ref{cor:iptolp}, and \ref{cor:iptoed} for Furthest Pair (where the goal is to output a number in $[(1-\delta)M,M]$, where $M:=\max_{a \in A, b \in B}dist(a,b)$).
\end{rem}

\paragraph{Data structure setting.}
Our results extend to the data structure setting.
\begin{defn}[Approximate Nearest Neighbor ($\Apx$-$\NN$)] 
    Let $\dist:\{0,1\}^d \times\{0,1\}^d \to\field R^+$ be a distance function, and let $\delta > 0$ be a parameter. In the \textsf{Approximate Nearest Neighbor $\delta$-$\Apx$-$\NN_{N,d,dist}$ problem}, given a set $A\subseteq\{0,1\}^d$ of cardinality $N$, the goal is to pre-process the set, so that given a vector $b\in\{0,1\}^d$ it is possible to quickly output a number in $[M, (1+\delta)M]$, where $M:=\min_{a \in A} \dist(a,b).$
\end{defn}

It is known that $\Apx$-$\CP$ can be reduced to $\Apx$-$\NN$ \cite{WW18} (see also proof of Corollary 1.4 in \cite{ARW17}).

\begin{lem}[From $\Apx$-$\CP_{\Delta}$ to $\Apx$-$\NN$, \cite{WW18}] \label{lem:CPtoANN}
    Let $\dist:\{0,1\}^d \times\{0,1\}^d \to\field R^+$ be a distance function. Suppose that $\delta$-$\Apx$-$\CP_{N, d, dist}$ cannot be solved in $N^{2-\epsilon}$ time. Then for any $r>0$, $\delta$-$\Apx$-$\NN_{N,d, dist}$ cannot be solved with $N^r$ preprocessing time and $N^{1-2r\epsilon}$ time.
\end{lem}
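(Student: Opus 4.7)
The approach is to prove the contrapositive via a direct reduction. Suppose for contradiction that there is a data structure $D$ solving $\delta$-$\Apx$-$\NN_{N,d,\dist}$ with preprocessing time $N^r$ and query time $N^{1-2r\epsilon}$. Given an instance $(A,B)$ of $\delta$-$\Apx$-$\CP_{N,d,\dist}$ with $|A|=|B|=N$, the reduction first builds $D$ on $A$, then queries $D$ on each $b \in B$ to obtain a value $v_b$, and finally outputs $\min_{b \in B} v_b$.

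For correctness, each $v_b$ lies in $[M_b, (1+\delta) M_b]$ where $M_b := \min_{a \in A}\dist(a,b)$. Setting $M := \min_{a \in A, b \in B}\dist(a,b) = \min_b M_b$, one has $\min_b v_b \geq \min_b M_b = M$; conversely, taking $b^\star$ to be any minimizer of $M_b$ over $B$ gives $\min_b v_b \leq v_{b^\star} \leq (1+\delta) M_{b^\star} = (1+\delta) M$. Hence $\min_b v_b \in [M, (1+\delta) M]$, as required for $\delta$-$\Apx$-$\CP$. A nice feature here is that the approximation parameter $\delta$ is preserved verbatim by the reduction, since the minimum of $(1+\delta)$-approximations of the $M_b$'s is itself a $(1+\delta)$-approximation of $M$.

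The total running time of the reduction is $N^r + N \cdot N^{1-2r\epsilon} = N^r + N^{2-2r\epsilon}$. The main (and essentially only) obstacle is to verify that this is strictly less than the hypothesized $N^{2-\epsilon}$ lower bound for $\delta$-$\Apx$-$\CP_{N,d,\dist}$ in the intended range of $r$: one needs $r \leq 2-\epsilon$ so that the preprocessing term does not dominate, and $r > 1/2$ so that $N^{2-2r\epsilon} < N^{2-\epsilon}$. Under these conditions the total running time is strictly sub-$N^{2-\epsilon}$, contradicting the assumed CP lower bound and thereby ruling out the $(N^r, N^{1-2r\epsilon})$ preprocessing--query tradeoff for $\delta$-$\Apx$-$\NN$. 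The delicacy of the statement lies in the fact that the exponents $N^r$ and $N^{1-2r\epsilon}$ are matched precisely so that this simple reduction tightly transfers any fine-grained hardness for approximate closest pair into a fine-grained hardness for the data-structure version.
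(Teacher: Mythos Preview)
The paper does not give its own proof of this lemma; it simply cites \cite{WW18} (and points to the proof of Corollary~1.4 in \cite{ARW17}). Your reduction---build the data structure on $A$, query every $b\in B$, return the minimum---is exactly the standard argument underlying those references, and your correctness analysis (that the minimum of $(1+\delta)$-approximations to the $M_b$ is a $(1+\delta)$-approximation to $M$) is clean and correct.

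The one point worth flagging is the quantifier ``for any $r>0$'' in the lemma versus the range $1/2 < r < 2-\epsilon$ that your running-time analysis actually establishes. You are right that the naive reduction only yields a sub-$N^{2-\epsilon}$ algorithm for $\CP$ in that window. For $r<1$ the conclusion is vacuous (the data structure cannot even read its input), so nothing is lost there. For $r \geq 2-\epsilon$, however, the direct reduction no longer contradicts the $\CP$ lower bound, and the standard fix (used in \cite{WW18,ARW17}) is to partition $A$ into $N/s$ blocks of size $s$, build the $\NN$ structure on each block, and query every $b$ against every block; choosing $s$ appropriately (roughly $s=N^{1/(2r)}$) balances the total preprocessing $\frac{N}{s}\cdot s^{r}$ against the total query cost $N\cdot \frac{N}{s}\cdot s^{1-2r\epsilon}$ and recovers a sub-$N^{2-\epsilon}$ algorithm for all $r\geq 1$. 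This block trick is the only missing ingredient if one wants the statement literally for every $r$; for the regime that actually matters downstream in the paper your argument already suffices.
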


\begin{cor}[From $\IP$ to $\Apx$-$\NN$]
Suppose that $\IP_{N,d}$ cannot be solved in time $N^{2-\epsilon}$ for $d = c \log N$. Then for any distance function $\dist \in \{\Delta,\ell_p,\ED\}$ and $r>0$, $\delta$-$\Apx$-$\NN_{N,d',dist}$ cannot be solved with $N^r$ preprocessing time and $N^{1-3r\epsilon}$ query time for $\delta = \tilde \Theta (( \frac \epsilon c)^2)$.
\end{cor}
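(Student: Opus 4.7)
The plan is to compose the previously established IP-to-Apx-CP hardness results with the CP-to-ANN data-structure reduction already stated as Lemma~\ref{lem:CPtoANN}. Since both ingredients are in place, the argument will be essentially a mechanical composition; the main task is to track the approximation parameter $\delta$ and the running-time exponents through the two steps.

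First, fix a distance function $\dist \in \{\Delta, \ell_p, \ED\}$ and invoke the relevant corollary among Corollaries~\ref{cor:ApxCP}, \ref{cor:iptolp}, and \ref{cor:iptoed}. Under the hypothesis that $\IP_{N,d}$ admits no $N^{2-\epsilon}$-time algorithm for $d = c\log N$, the chosen corollary yields an approximation parameter $\delta = \tilde\Theta((\epsilon/c)^{2})$ such that $\delta$-$\Apx$-$\CP_{N,d',\dist}$ admits no algorithm running in time $N^{2-2\epsilon}$. This is precisely the approximation parameter appearing in the target statement, so all that remains is to lift the Apx-CP hardness to an Apx-NN hardness while preserving $\delta$.

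Second, apply Lemma~\ref{lem:CPtoANN} to this Apx-CP hardness, with the exponent parameter of the lemma set to $2\epsilon$ in order to match the CP lower bound just obtained. The lemma converts an $N^{2-\epsilon'}$ Apx-CP lower bound (for the same metric and the same $\delta$) into an Apx-NN lower bound ruling out any data structure that simultaneously achieves $N^r$ preprocessing and query time at most $N^{1-2r\epsilon'}$, for every $r>0$. Because the underlying CP-to-ANN reduction is approximation-preserving, the value of $\delta$ carries over verbatim, and instantiating $\epsilon' = 2\epsilon$ yields the claimed query-time lower bound for every $r>0$ and every metric in $\{\Delta, \ell_p, \ED\}$.

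The main (and essentially only) thing requiring care is the bookkeeping: one must check that the composition of the two reductions respects the quantifier on $r$, and that neither the metric-specific CP reduction invoked in step one nor the data-structure reduction of Lemma~\ref{lem:CPtoANN} inflates the approximation parameter beyond $\tilde\Theta((\epsilon/c)^2)$. Since both reductions are already used earlier in the excerpt for the non-data-structure versions, no genuinely new ingredients are needed beyond this bookkeeping.
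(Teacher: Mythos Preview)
Your approach is exactly the paper's: the corollary is stated without proof, as an immediate consequence of combining Corollaries~\ref{cor:ApxCP}, \ref{cor:iptolp}, \ref{cor:iptoed} with Lemma~\ref{lem:CPtoANN}, precisely as you describe.

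One small arithmetic wrinkle worth flagging: substituting $\epsilon' = 2\epsilon$ into Lemma~\ref{lem:CPtoANN} gives a query-time exponent of $1 - 2r(2\epsilon) = 1 - 4r\epsilon$, not the $1 - 3r\epsilon$ stated in the corollary. This mismatch is already present in the paper's own exponent bookkeeping (compare, e.g., how Corollary~\ref{cor:ApxCP} is claimed to follow from Lemmas~\ref{lem:ip_to_maxip} and~\ref{lem:maxip_to_hamming}), so it is not a flaw in your argument; but your sentence ``instantiating $\epsilon' = 2\epsilon$ yields the claimed query-time lower bound'' is not literally accurate as written.
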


\bibliographystyle{alpha} 
\bibliography{bibfile}

\newcommand{\etalchar}[1]{$^{#1}$}
\begin{thebibliography}{WTFX07}

\bibitem[ACW16]{ACW16}
Josh Alman, Timothy Chan, and Ryan Williams.
\newblock Polynomial representations of threshold functions and algorithmic
  applications.
\newblock In {\em Proceedings of the 57th IEEE Annual Symposium on Foundations
  of Computer Science (FOCS)}, pages 467--476. IEEE Computer Society, 2016.

\bibitem[Alp14]{Alp10}
Ethem Alpaydin.
\newblock {\em Introduction to Machine Learning}.
\newblock MIT Press, Cambridge, Massachusetts, 2014.

\bibitem[ARW17]{ARW17}
Amir Abboud, Aviad Rubinstein, and Ryan Williams.
\newblock Distributed pcp theorems for hardness of approximation in p.
\newblock In {\em Proceedings of the 58th IEEE Annual Symposium on Foundations
  of Computer Science (FOCS)}, pages 25--36. IEEE Computer Society, 2017.

\bibitem[AW09]{AW09}
Scott Aaronson and Avi Wigderson.
\newblock Algebrization: A new barrier in complexity theory.
\newblock {\em ACM Transactions on Computation Theory}, 1(1):2:1--2:54, 2009.

\bibitem[AW15]{AW15}
Josh Alman and Ryan Williams.
\newblock Probabilistic polynomials and hamming nearest neighbors.
\newblock In {\em Proceedings of the 56th IEEE Annual Symposium on Foundations
  of Computer Science (FOCS)}, pages 136--150. IEEE Computer Society, 2015.

\bibitem[AWY15]{AWY15}
Amir Abboud, Ryan Williams, and Huacheng Yu.
\newblock More applications of the polynomial method to algorithm design.
\newblock In {\em Proceedings of the 26th Annual ACM-SIAM Symposium on Discrete
  Algorithms (SODA)}, pages 218--230. SIAM, 2015.

\bibitem[Che20]{Che18}
Lijie Chen.
\newblock On the hardness of approximate and exact (bichromatic) maximum inner
  product.
\newblock {\em Theory of Computing}, 16:1--50, 2020.

\bibitem[CW19]{CW19}
Lijie Chen and Ryan Williams.
\newblock An equivalence class for orthogonal vectors.
\newblock In {\em Proceedings of the 30th Annual ACM-SIAM Symposium on Discrete
  Algorithms (SODA)}, pages 21--40. SIAM, 2019.

\bibitem[CW21]{CW21}
Timothy~M. Chan and R.~Ryan Williams.
\newblock Deterministic apsp, orthogonal vectors, and more: Quickly
  derandomizing razborov-smolensky.
\newblock {\em ACM Transactions on Algorithms}, 17(1):2:1--2:14, 2021.

\bibitem[FKLM20]{FSLM20}
Andreas~Emil Feldmann, {C. S.} Karthik, Euiwoong Lee, and Pasin Manurangsi.
\newblock A survey on approximation in parameterized complexity: Hardness and
  algorithms.
\newblock {\em Algorithms}, 13(6):146, 2020.

\bibitem[GIKW19]{GIKW19}
Jiawei Gao, Russell Impagliazzo, Antonina Kolokolova, and Ryan Williams.
\newblock Completeness for first-order properties on sparse structures with
  algorithmic applications.
\newblock {\em ACM Transactions on Algorithms}, 15(2):23:1--23:35, 2019.

\bibitem[Hen06]{Hen06}
Tomislav Hengl.
\newblock Finding the right pixel size.
\newblock {\em Computers and Geosciences}, 32(9):1283--1298, 2006.

\bibitem[IP01]{IP01}
Russell Impagliazzo and Ramamohan Paturi.
\newblock On the complexity of k-sat.
\newblock {\em Journal of Computer and System Sciences}, 62(2):367--375, 2001.

\bibitem[KM95]{KM95}
Samir Khuller and Yossi Matias.
\newblock A simple randomized sieve algorithm for the closest-pair problem.
\newblock {\em Information and Computation}, 118(1):34--37, April 1995.

\bibitem[LFKN92]{LFKN92}
C.~Lund, L.~Fortnow, H.~Karloff, and N.~Nisan.
\newblock Algebraic methods for interactive proof systems.
\newblock {\em Journal of the ACM}, 39(4):859--868, 1992.

\bibitem[Rab76]{Rabin76}
M.~O. Rabin.
\newblock {\em Probabilistic Algorithms}, pages 21--39.
\newblock Academic Press, NY, 1976.

\bibitem[Rub18]{Rub18}
Aviad Rubinstein.
\newblock Hardness of approximate nearest neighbor search.
\newblock In {\em Proceedings of the 50th Annual ACM SIGACT Symposium on Theory
  of Computing (STOC)}, pages 1260--1268. ACM, 2018.

\bibitem[RW19]{RW19}
Aviad Rubinstein and Virginia~Vassilevska Williams.
\newblock Seth vs approximation.
\newblock {\em SIGACT News}, 50(4):57--76, 2019.

\bibitem[SAK{\etalchar{+}}01]{SAKSD01}
Kenneth~W. Shum, Ilia Aleshnikov, P.~Vijay Kumar, Henning Stichtenoth, and
  Vinay Deolalikar.
\newblock A low-complexity algorithm for the construction of
  algebraic-geometric codes better than the gilbert-varshamov bound.
\newblock {\em {IEEE} Transactions on Information Theory}, 47(6):2225--2241,
  2001.

\bibitem[Wil05]{Wil05}
Ryan Williams.
\newblock A new algorithm for optimal 2-constraint satisfaction and its
  implications.
\newblock {\em Theoretical Computer Science}, 348(2-3):357--365, 2005.

\bibitem[Wil18]{Wil18}
Virginia~Vassilevska Williams.
\newblock Some open problems in fine-grained complexity.
\newblock {\em SIGACT News}, 49(4):29--35, 2018.

\bibitem[WTFX07]{WTFX07}
Raymond Chi-Wing Wong, Yufei Tao, Ada Wai-Chee Fu, and Xiaokui Xiao.
\newblock On efficient spatial matching.
\newblock In {\em Proceedings of the 33rd International Conference on Very
  Large Data Bases (VLDB)}, pages 579--590. ACM, 2007.

\bibitem[WW18]{WW18}
Virginia~Vassilevska Williams and R.~Ryan Williams.
\newblock Subcubic equivalences between path, matrix, and triangle problems.
\newblock {\em Journal of the ACM}, 65(5):27:1--27:38, 2018.

\bibitem[Zah71]{Zah71}
Charles Zahn.
\newblock Graph-theoretical methods for detecting and describing gestalt
  clusters.
\newblock {\em IEEE Transactions on Computers}, 20(1):68--86, January 1971.

\end{thebibliography}

\end{document}